\documentclass[a4paper,fleqn]{cas-dc}

\usepackage[numbers,sort&compress]{natbib}
\usepackage{longtable}
\usepackage{booktabs}
\usepackage{amsmath}
\allowdisplaybreaks
\usepackage{amssymb}
\usepackage{graphicx}
\usepackage{xcolor}
\usepackage{algorithm}
\usepackage{algpseudocode}
\usepackage{multirow}
\usepackage{enumitem}
\usepackage{makecell}
\usepackage{colortbl}  

\newcommand{\nollm}{^\text{noLLM}}

\usepackage{cuted}
\usepackage{caption}
\usepackage{xurl}

\usepackage{amsthm}
\theoremstyle{plain} 
\newtheorem{proposition}{Proposition}
\newtheorem{lemma}{Lemma}

\usepackage{mathtools}
\newcommand{\keepcomment}{1} 
\usepackage{soul}
\ifnum\keepcomment=1
	\usepackage[colorinlistoftodos,textsize=scriptsize]{todonotes} 
    \newcommand{\stkout}[1]{\ifmmode\text{\st{\ensuremath{#1}}}\else\st{#1}\fi}
    
\else
	
	\usepackage[disable]{todonotes} 
\fi

\ifnum\keepcomment=1
	\usepackage[colorinlistoftodos,textsize=scriptsize]{todonotes} 
    \newcommand{\stkoutz}[1]{\ifmmode\text{\st{\ensuremath{#1}}}\else\st{#1}\fi}
    
\else
	
	\usepackage[disable]{todonotes}
\fi

\begin{document}

\shorttitle{From Hype to ROI: An Economic Model for LLM Adoption} 

\title[mode=title]{From Hype to ROI: An Economic Model for LLM Adoption in Information Systems and Security Operations}

\shortauthors{Casoria et~al.}

\author[1]{Pellegrino Casoria}%
\cormark[1]                         
\ead{pellegrino.casoria@unina.it} 
\credit{Conceptualization, Methodology, Software, Validation, Formal analysis, Investigation, Resources, Data Curation, Writing - Original Draft, Writing - Review \& Editing, Visualization, Project administration}  

\author[2]{Andrea Araldo}%
\credit{Conceptualization, Methodology, Validation, Formal analysis, Investigation, Data Curation, Writing - Original Draft, Writing - Review \& Editing, Visualization, Supervision}

\author[2]{Rosario Patanè}%
\credit{Software, Formal analysis, Writing - Review \& Editing, Visualization}

\author[1]{Simon Pietro Romano}%
\credit{Writing - Review \& Editing, Supervision}

\cortext[1]{Corresponding author}

\affiliation[1]{organization={University of Naples Federico II},
                department={Department of Electrical Engineering and Information Technology (DIETI),},
                city={Naples},
                country={Italy}}

\affiliation[2]{organization={SAMOVAR, Télécom SudParis, Institut Polytechnique de Paris},
                city={Palaiseau},
                country={France}}

\begin{abstract}
Large Language Models~(LLMs) promise significant efficiency gains across information-intensive operations by automating repetitive language-based tasks. However, adoption decisions remain largely opportunistic, as organizations struggle to move from pilots to sustainable industrialization. Without systematic economic frameworks, practitioners cannot assess whether LLM investments create value in a given context.

This paper develops a general economic model to support informed LLM adoption decisions in operational information systems. Facing high uncertainty, we represent parameters as random variables centered on authoritative benchmarks, and evaluate the model by Monte Carlo simulation across pessimistic-to-optimistic assumptions on LLM efficiency. The ROI distributions show the probability of profitability and the Value at Risk for each configuration, together with a sensitivity analysis of its main drivers.

As a first instantiation, we apply the model to Security Operations Centers~(SOCs), mapping workflows onto established standards to identify tasks suited to LLM augmentation. The model balances productivity gains, driven by alert volume, analyst effort, and LLM efficiency, against operational costs, including inference, infrastructure, and error remediation, covering both cloud API and self-hosted deployments.

Results show that adoption economics are context-dependent, and that automation is an exploitable lever, gated by accuracy, while the cost of errors and the model's price are secondary. Across the analyzed models and prices, most configurations are unprofitable, with profitability limited to high-accuracy, low-uncertainty deployments. 

The model gives organizations a quantitative basis for evidence-based adoption decisions, by locating the boundary between profit and loss, and the loss risked beyond it.
\end{abstract}

\begin{keywords}
Large Language Models \sep Economic Model \sep Return on Investment~(ROI) Analysis \sep Monte Carlo Simulation \sep Security Operations Center~(SOC)
\end{keywords}

\maketitle

\section{Introduction}

Large Language Models (LLMs) have emerged as a candidate technology for automating knowledge-intensive work across information systems. Gartner reports that 42\% of organizations are piloting AI agents, while 46\% plan to adopt them in the near term \cite{Gartner2025}. However, translating pilots into production value remains difficult. MIT research indicates that 95\% of enterprise generative AI implementations fail \cite{MIT2025}. S\&P Global found that 42\% of companies abandoned most AI initiatives in 2025, compared with 17\% in 2024, and that organizations scrapped 46\% of proof-of-concepts before production \cite{SPGlobal2025}. McKinsey reports a similar pattern: although 65\% of organizations use generative AI, only 10--15\% report measurable business impact \cite{McKinsey2024}. Gartner estimates that 85\% of AI initiatives never reach production \cite{GartnerAI2024}. 

Recent research emphasizes that technology investment decisions are economic as well as technical. They involve trade-offs in timing, incentives, and risk, and require quantitative approaches that explicitly account for these factors \cite{woods2025cyberrisk, sognen2025delay}. The World Economic Forum echoes this view, noting that leaders must quantify risks and their economic impacts to align investments with core business objectives \cite{WEF2025}. Yet practitioners lack quantitative decision-support tools for assessing whether LLM adoption will generate returns in a specific operational context. This shortfall mirrors the gap between adoption intent and realized value.
The OECD identifies difficulty estimating ROI in advance as the most frequently reported obstacle to AI adoption \cite{OECD2025AIAdoption}. This paper addresses that challenge.

A prominent, cost-intensive example of this challenge is the Security Operations Center (SOC). The threat landscape is expanding in both the volume and speed of attacks, straining SOC teams beyond their staffing capacity and pushing them toward automation. To illustrate the scale, ENISA recorded 4,875 incidents between July 2024 and June 2025 \cite{ENISA2025}, while Microsoft reports 600 million daily attacks across its customer base and more than 1,500 unique threat groups \cite{Microsoft2024}. Attack dynamics have also shifted. CrowdStrike observed average breakout times of 48 minutes in 2024 and found that 79\% of detected intrusions were malware-free, relying instead on credential abuse and hands-on-keyboard techniques that evade signature-based defenses \cite{CrowdStrike2025}.

The workforce needed to meet this demand is not keeping pace. The global cybersecurity workforce gap has reached 4.8 million, a 19\% year-over-year increase; in addition, 67\% of organizations report staffing shortages and 90\% identify skills gaps \cite{ISC2024}. Retention compounds the problem: 70\% of analysts with fewer than five years of experience leave within three years \cite{SANS2025}. Fewer than 15\% of organizations report confidence in both their staffing levels and skills coverage \cite{WEF2025}. These capacity constraints undermine operational effectiveness. Understaffed teams struggle to triage event volumes, while experienced analysts command increasing labor-market premiums. 

Against this backdrop, the Cognitive SOC has emerged as a natural setting for LLM augmentation. It uses AI/ML and cognitive computing to emulate human-like reasoning across a continuous detect--analyze--respond cycle \cite{binbeshr2025cognitive}. This context motivates decision-support models for LLM adoption under deep uncertainty. We instantiate our model on the SOC as a first, high-stakes operational setting.

\subsection{Contributions of this work}
This paper develops a general economic model for evaluating LLM adoption in operational information systems under high uncertainty. It determines \emph{whether} adoption is profitable and under what conditions, and identifies \emph{where} organizations should focus their efforts to improve profitability. Taking the SOC as a first instantiation, we make three contributions:
\begin{enumerate}[leftmargin=*]
    \item \textbf{A general cost-benefit model of LLM adoption.} We weigh productivity gains against operational costs, including the time operators spend validating LLM outputs and redoing tasks the model gets wrong. The model also accounts for the risks associated with uncertainty in LLM effectiveness. We derive closed-form conditions for the maximum capital expenditure an LLM deployment can bear while still creating value.
    \item \textbf{Benchmark-grounded parameterization.} We instantiate the model using parameters from authoritative sources, including salaries, task effort, and work-item volumes, as well as LLM cost and accuracy. We organize deployment options into archetypes that span cost, accuracy, and deployment characteristics, providing a reusable basis for evaluation.
    \item \textbf{Decision support for investment prioritization.} Using Monte Carlo simulation at different uncertainty levels, we estimate each configuration's probability of profitability and value at risk. Sensitivity analysis identifies the parameters that most strongly influence outcomes, helping organizations prioritize efforts to achieve positive returns. We release the implementation as open source so practitioners can analyze their own operating conditions.
\end{enumerate}

These results show that, given today's models, prices, and SOC parameters, LLM adoption is unprofitable for most configurations. Adoption is profitable only for high-accuracy API deployments with low capital costs and low uncertainty. Profitability depends mainly on model autonomy, provided the model is accurate enough for its outputs to be trusted. Under low uncertainty, full autonomy raises the capital expenditure that the most accurate models can support with high confidence from about 10\% to 80--90\% of baseline operating costs. By contrast, eliminating the rework penalty raises this threshold only to 20\%. In self-hosted deployments, GPU costs outweigh the efficiency savings at typical SOC alert volumes.

\section{Background and Related Work}

A substantial body of research examines the economics of technology adoption and the modeling of information-intensive operations. A rapidly growing body of work also studies the integration of LLMs into professional workflows. However, the economic implications of LLM adoption remain largely unexplored. Existing studies characterize capabilities, benchmark accuracy, and propose deployment taxonomies, but do not provide a quantitative framework for evaluating adoption's economic value. We review two complementary strands that motivate our contribution: the methodological and economic foundations of our modeling choices (Sec.~\ref{sec:methodological}), and empirical work on the economics and productivity of security operations, the domain in which we apply our model (Sec.~\ref{sec:soc-economics}).

\subsection{Methodological and Economic Foundations} \label{sec:methodological}
Treating technology adoption as an investment under uncertainty is well established. The economics of information security has matured into a cross-disciplinary field~\cite{Anderson2006Economics}, which has shown that incentive misalignment causes failure as often as technical flaws. Protection is framed as an optimization problem with diminishing marginal returns, where optimal spending balances cost against expected loss~\cite{Gordon2002Economics}. We adopt this cost-benefit logic but, whereas prior work offers conceptual frameworks and deterministic analysis, we operationalize it through distributional reasoning and simulation. Our model also includes costs specific to LLMs: inference and infrastructure, output validation, and error remediation. AI's economic effect can be derived at the task level from the fraction of tasks affected and the savings on each~\cite{acemoglu_simple_2025}. We therefore model the impact of LLM adoption as the sum of time and cost savings across tasks, holding task composition fixed and final output quality at the human baseline.

Decision analysis and Monte Carlo methods provide a basis for reasoning about investments under incomplete information. They represent outcomes as distributions rather than single point estimates~\cite{Clemen2013MakingHard, Zhang2021MonteCarlo, Mateos2006MonteCarlo}. We build on this tradition to characterize the ROI distribution across plausible ranges of LLM efficiency and cost. Simulation is also a recognized source of knowledge in security economics, provided its assumptions are explicit~\cite{Kianpour2025Simulations}. It lets us track interactions among uncertain parameters while abstracting the system to its economically relevant properties.

A separate line of research examines the value of AI in knowledge work. It identifies accuracy and the cost of correcting errors, rather than model price, as decisive factors. AI has a jagged frontier of capability, producing gains on self-contained tasks but losses on complex managerial tasks that require context-specific judgment~\cite{dellacqua_navigating_2026}. As automation costs fall, verification, limited by human time and feedback latency, becomes the binding constraint~\cite{catalini_simple_2026}. These findings motivate our modeling of accuracy, which determines the effort saved, as well as the verification and rework demanded by errors. Full automation is often not cost-minimizing. Because accuracy has diminishing returns, retaining humans for residual tasks emerges as the economic equilibrium~\cite{li_economics_2026}. At the individual-model level, the expected cost of a correct answer combines accuracy and inference cost. Researchers have proposed it as a measure of language-model productivity and benchmarked it against the cost of a human expert~\cite{erol_cost_pass_2026}. However, prior work does not extend these insights from individual tasks and models to the organization. Our model combines accuracy with the cost of verifying and reworking errors to assess, under uncertainty, the value of adoption in a specific operational setting.

\subsection{SOC Economics and Productivity} \label{sec:soc-economics}
The economics and productivity of security operations have been studied for over a decade. This literature provides the empirical grounding for our parameters.

A study of GenAI adoption across hundreds of organizations offers early empirical evidence of productivity gains in security operations, although confounders prevent causal identification~\cite{bono_generative_2025}. This motivates an ex-ante economic assessment of adoption.

Embedded anthropological studies show that profit-based metrics cannot capture all SOC value and that incident-processing time varies by threat type~\cite{sundaramurthy_tale_2014}. They trace the difficulty of measuring value to a mismatch between what analysts value and what management measures~\cite{sundaramurthy_turning_2016}. These findings motivate our cost-benefit approach, our severity-dependent effort model, and our inclusion of costs beyond throughput. This broader accounting is important because management's reliance on automation contrasts with analysts' skepticism about the metrics used to justify it~\cite{kokulu2019}.

Work on machine-learning network intrusion detection makes a closely related argument: tools should be judged by operational value, weighing effectiveness against lifecycle cost rather than benchmark performance. It also calls for reporting the factors that determine value before deployment~\cite{apruzzese_sok_2023}. However, it addresses a single detection model, not adoption at the organizational level.

Research on LLMs in SOC operations is substantial, spanning systematic reviews, capability-maturity taxonomies, and qualitative studies of analyst use~\cite{habibzadeh_large_2026, srinivas_ai-augmented_2025, singh_llms_2025}. These studies identify alert fatigue, workforce shortages, and the gap between pilots and deployment as unresolved problems. They consistently find that LLMs serve as human-supervised sensemaking aids rather than autonomous decision-makers, with deployments remaining at low levels of autonomy. Yet evidence relevant to adoption decisions remains fragmentary. For incident summarization, LLMs lack the security reasoning required for autonomous operation and may omit critical details or introduce factual errors. Collaborative use, by contrast, can reduce analyst effort and improve consistency~\cite{kramer2025integrating}. Other work formalizes human--AI teaming modes but does not establish which mode creates net economic value~\cite{chhetri2024human}. Reported cost savings remain limited to indirect proxies~\cite{binbeshr2025cognitive}.
 
However, this literature does not model adoption costs, return on investment, or the tradeoff between analysts and automation. Prior contributions are descriptive, empirical, or conceptual, focus on post-deployment measurement, or target non-LLM technologies; none provides a decision-support tool for evaluating adoption. We address this gap with a quantitative cost-benefit model and Monte Carlo simulation. The model assesses, before adoption, whether LLMs create net value for a given SOC configuration and under what conditions. It explicitly represents LLM-specific factors, including accuracy, token and infrastructure costs, and error remediation.

\subsection{A Reference SOC Model} \label{sec:reference-soc}

A \textit{Security Operations Center (SOC)} is the organizational unit responsible for monitoring and detecting security events. Incident response is, in principle, delegated to a \textit{Computer Security Incident Response Team (CSIRT)}, though in most organizations the two functions are combined~\cite{enisa_how_2020}. 

Modeling the economic impact of LLM on SOC requires characterizing the tasks that compose its work. Several standards codify these operational functions, each emphasizing a different aspect, among them \textit{FIRST's CSIRT Services Framework}~\cite{first_csirt_2019}, \textit{NIST CSF~2.0}~\cite{nist_csf_2024}, \textit{ISO/IEC~27035}~\cite{isoiec_27035_2023}, and \textit{IETF RFC~2350}~\cite{ietf_rfc_1998}. We build our reference capability model on \textit{NIST SP~800-61}, a widely adopted reference for incident response (Rev.~3~\cite{nist_sp_r3_2025} for the overall structure and Rev.~2~\cite{nist_sp_r2_2012} for the detailed breakdown of actions), as it decomposes incident handling at the granularity of analyst work, the unit our cost model requires.

\subsubsection{Tasks}

\begin{figure}
    \centering
    \includegraphics[width=0.95\linewidth]{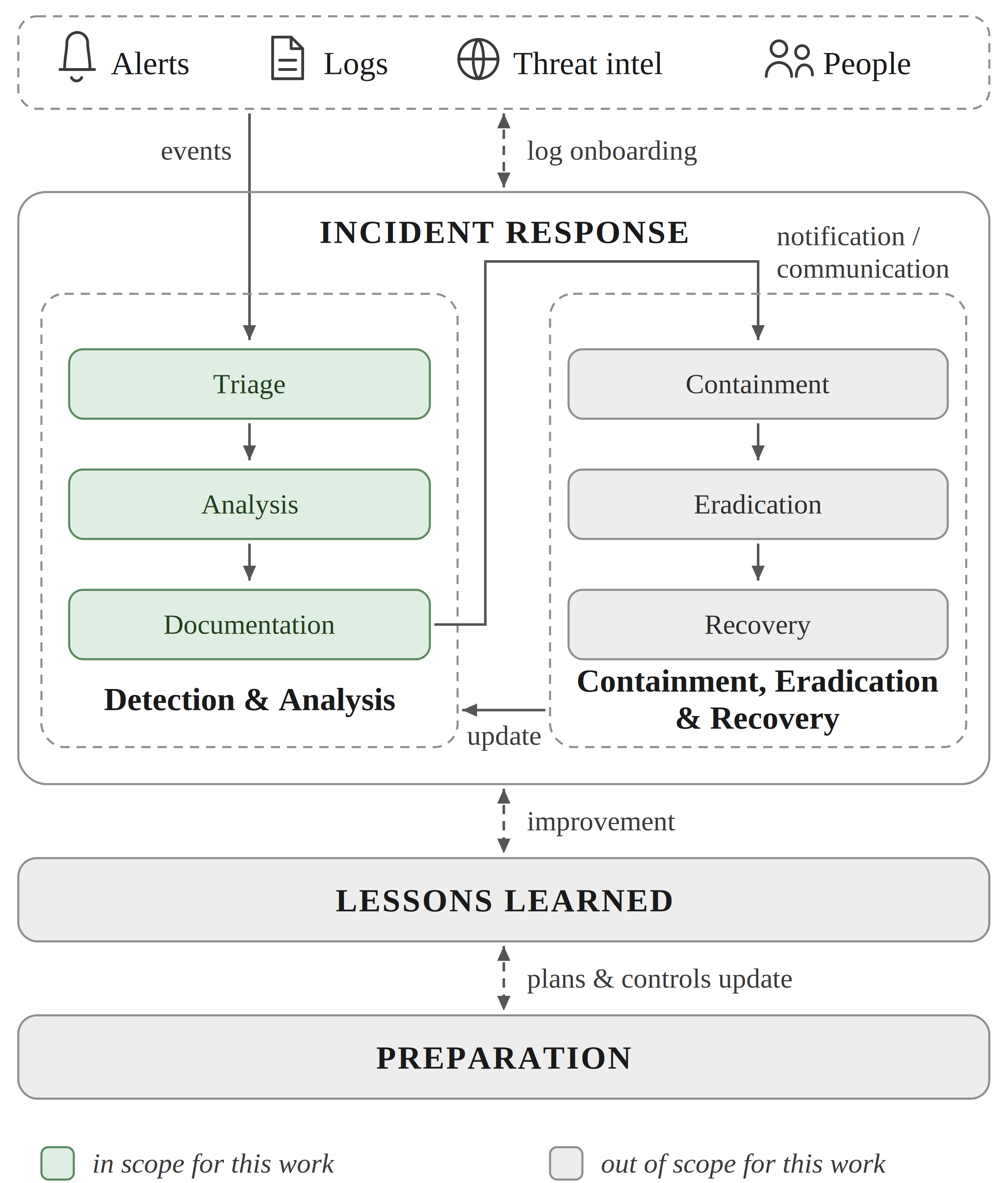}
    \caption{SOC incident response lifecycle}
    \label{fig:soc-lifecycle}
\end{figure}

Figure~\ref{fig:soc-lifecycle} depicts our reference incident response lifecycle. Integrated sources (alerts, logs, threat intelligence, people) generate a continuous stream of events; analysts triage and validate them to filter out noise, then analyze true positives to establish scope and cause, and document their findings. Confirmed incidents then move to containment, eradication and recovery, whose findings loop back into analysis when new evidence surfaces~\cite{nist_sp_r2_2012}.

In this paper's simulations, the model captures recurring operational cost only. Specifically, we consider \textit{Detection and Analysis} tasks, excluding the execution of containment, eradication and recovery, whose effort is site-specific, involves teams beyond the SOC, and can span weeks for larger incidents. It is therefore not a recurring unit of work with an estimable per-incident cost. The selection of response measures, however, is treated as the closing step of analysis and thus remains in scope.

Triage also assigns a severity level to each incident, which determines possible escalation: our simulations consider only low-, medium- and high-severity incidents, since critical ones are redirected to a dedicated response team drawing on roles outside the SOC. We treat the escalated fraction as a model parameter, as NIST prescribes no escalation thresholds. Evidence handling and communication beyond the initial notification are cross-cutting and left out, as are preparation and lessons learned.

\subsubsection{Professional Roles}

Several frameworks catalog SOC roles and competences. The \emph{NIST NICE Workforce Framework}~\cite{nist_nice_2026} describes work through task, knowledge and skill statements grouped into work roles, and the \emph{ENISA European Cybersecurity Skills Framework (ECSF)}~\cite{enisa_ecsf_2022} defines twelve role profiles with associated tasks, skills and competences; FIRST's \emph{CSIRT Services Roles and Competencies} addendum maps operational roles onto service functions, drawing competences from NICE~\cite{first_csirt_roles_2019}.

These frameworks map incident-management work onto a small number of coarse-grained roles, too aggregated for a cost model that distinguishes who performs each task. We therefore combine two elements. From ITIL (Information Technology Infrastructure Library)~\cite{axelos_itil_2014} we take the distinction between \emph{functional escalation}, which routes work to whoever has the required skills or system knowledge, and \emph{hierarchical escalation}, which routes it upward for seniority or authority. From practice we take the organization of SOC analysts into levels (L1--L3), a convention found in practitioner literature, vendor documentation and job descriptions, but not prescribed by any formal standard. Table~\ref{tab:soc-levels} states the operational definitions we adopt, with functional escalation driving the L1$\to$L2$\to$L3 progression on task complexity and hierarchical escalation setting each level's seniority (i.e.\ typical years of experience).

\begin{table}
    \centering
    \footnotesize
    \caption{SOC Roles by functions and seniority}
    \label{tab:soc-levels}
    \begin{tabular}{l >{\raggedright\arraybackslash}p{4cm} >{\raggedright\arraybackslash}p{2.5cm}}
        \toprule
        \textbf{Level} & \textbf{Functions} & \textbf{Experience} \\
        \midrule
        L1 & Procedural execution via documented playbooks & Entry-level (0--2 years) \\
        L2 & Analytical judgment and cross-system correlation & Mid-level (3--5 years) \\
        L3 & Deep technical analysis and specialized expertise & Senior / SME (6+ years) \\
        \bottomrule
    \end{tabular}
\end{table}

\subsubsection{Performance Metrics}

Measuring SOC performance is an open problem. Academic work has proposed metric sets and analyst-performance frameworks~\cite{agyepong_challenges_2020, agyepong_systematic_2023}, yet systematic studies find the established metrics insufficient and not systematically developed, being loosely derived from generic security metrics and industry best practice~\cite{vielberth_security_2020, forsberg_technical_2023}. No single standard therefore governs which metrics a SOC should adopt.

For a consistent operational vocabulary we rely on \emph{SOC-CMM}~\cite{soc-cmm_metrics_2024}, a widely used practitioner maturity framework, which organizes a set of 100 metrics by \emph{domain}, \emph{category} and \emph{type} (count, percentage, time, cost, trend). With reference to the operating model of Fig.~\ref{fig:soc-lifecycle}, we retain the \emph{Security monitoring} and \emph{Security incident management} categories (domain: \emph{Services}). Our economic model draws on this foundation to derive its core variables (alert volume, accuracy, severity, per-incident cost, per-task effort), grounding them in established operational practice. Among its time metrics are the \emph{Mean Time to Pick-up} (MTTP), from alert detection to the start of triage; the \emph{Mean Time to Analyse} (MTTA), from the start to the completion of alert analysis; and the \emph{Mean Time to Respond} (MTTR), from incident identification to the hand-over to incident response.

\section{Methodology: The Cost Model} \label{sec:general-cost-model}

This section formalizes our cost model. To clarify the presentation, we first keep the formulation general and do not distinguish between scenarios with and without LLM adoption. After establishing this general formulation, we mark baseline quantities without LLMs as ``$\square^{\nollm}$'' and quantities under a specific LLM model $m\in\mathcal M$ as ``$\square^{m}$''.

An information system receives a stream of operational events that must be handled. Events may belong to different classes, each with its own criticality level; let~$\mathcal A$ denote the set of criticality levels. In a SOC, these events are typically alerts with different severity levels. Each event requires one or more tasks; let~$\mathcal K$ denote the set of task types. Let $\mathcal R$ denote the set of roles in the information system.

Let $\Delta T$ be the analysis window, measured in hours. Let $\tau_{k,r,a}$ be the time, in hours, that an operator with role $r$ spends on task $k\in\mathcal K$ for an event of criticality $a\in\mathcal A$.
Let $c_r>0$ be the hourly cost to the organization of an employee with role $r\in\mathcal R$. The values used in this study are described in Section~\ref{sec:par_calibration}.

\subsection{Cost without LLM}
Let $C_\text{inv}$ be the portion of operational cost per unit of time that is not directly related to operations. It includes costs such as facilities staff, building maintenance, and infrastructure. This cost does not change materially with the introduction of an LLM. The personnel time spent per unit of time by operators with role $r$ is
\begin{align}
\tau_r\nollm = 
N\cdot
\sum_{a\in\mathcal A}
h_a \cdot p_a\cdot 
\sum_{k\in\mathcal K}
\tau_{k,r,a}
\label{eq:cost-no-LLM}
\end{align}
where $N$ is the number of events arriving during the analysis window $\Delta T$. A fraction $p_a\in[0,1]$ of these events has criticality $a\in\mathcal A$, with $\sum_{a\in\mathcal A} p_a=1$. Only a fraction $h_a\in[0,1]$ of events with criticality $a$ requires operator attention. The remaining fraction, $1-h_a$, is handled automatically by traditional frameworks without substantial human intervention.

The personnel cost per unit of time of an organization that does not adopt LLMs is:
\begin{align}
\label{eq:Cnollm}
C\nollm  = 
\underbrace{
\sum_{r\in\mathcal R} c_r \cdot \tau_r\nollm
}_{C_\textit{pers}\nollm}
+ C_\text{inv}
\end{align}
where $C_\textit{pers}\nollm$ denotes the operational personnel cost per unit of time when LLMs are not used.

The time spent by an operator with role $r$ on a task $k$ for an event of criticality $a$ is
\begin{align}
\tau_{k,r,a} = \tau_{k,a}\cdot \alpha_{a,r}
\end{align}
where $\tau_{k,a}$ is the time required to perform task $k$ for a single event of criticality $a$, and $\tau_a=\sum_k\tau_{k,a}$ is the total analyst effort per event of that criticality. A fraction $\alpha_{a,r}$ of this time is spent by an operator with role $r$, where $\sum_{r}\alpha_{a,r}=1$ for all $a\in\mathcal A$.

The values of $\tau_{k,a}$ and $\alpha_{a,r}$ used in the numerical results are reported in Sec.~\ref{sec:task_effort} (Tab.~\ref{tab:effort-alloc}).

\subsection{Cost when using LLMs}
Let $\mathcal M$ denote the set of available LLM models. For each event and task $k\in\mathcal K$, we assume that using an LLM can improve the efficiency of an operator with role $r\in\mathcal R$. When an operator with role $r$ uses model $m$ for task $k$ on an event of criticality $a$, the model returns an answer in a fraction $\eta_{k,r,a}\in[0,1]$ of the time a human would otherwise need. We call $\eta_{k,r,a}$ the \emph{speed ratio}; it includes the time the operator spends prompting the model and checking its output. The model's answer is correct with probability $E^m\in[0,1]$, which we call its \emph{accuracy}. With probability $1-E^m$, the answer is incorrect and the operator must redo the entire task manually. Therefore, the personnel time spent per unit of time by operators with role $r$ when using model $m\in\mathcal M$ is

\begin{equation}
\begin{split}
\tau_r^m = N\cdot\sum_{a\in\mathcal A} h_a \cdot p_a \cdot \sum_{k\in\mathcal K} \Big[ & E^m\cdot\eta_{k,r,a}\cdot\tau_{k,r,a} \\
&+ (1-E^m)\cdot(1+\beta_{k,r,a})\cdot\tau_{k,r,a}\Big]
\end{split}
\end{equation}
where $\beta_{k,r,a}\geq0$ is the additional rework penalty for an incorrect answer. We assume that incorrect answers are detected and redone, so that final output quality remains at the human baseline~(Sec.~\ref{sec:methodological}). The overall cost per unit of time is then
\begin{equation}
\label{eq:Cm}
C^m = \underbrace{\sum_{r\in\mathcal R} c_r\,\tau_r^m}_{C^m_\textit{pers}} + C^m_\text{tech} + C_\text{inv}
\end{equation}
where $C_\textit{pers}^m$ denotes the operational personnel cost per unit of time when using model $m$, and $C^m_\text{tech}$ is the technology cost, detailed below. The formulation also accommodates multi-model deployments by assigning a model $m_{k,a,r}$ to each task, criticality level, and role. The single-model case is recovered when this assignment is constant. \\

\subsubsection{Technology cost ($C^m_\text{tech}$\,)} The technology cost comprises inference and infrastructure costs:
\begin{align}
C^m_\text{tech} = C^m_\text{tok} + C^m_\text{infra}
\end{align}

The formulation covers both cloud and self-hosted deployments. For a cloud API, inference is billed per million tokens and no infrastructure is owned ($C^m_{\text{infra}} = 0$). For a self-hosted deployment, we assume that GPUs are rented by the hour from a cloud provider. Token pricing is then zero ($C^m_{\text{tok}} = 0$), and the cost comes entirely from renting the GPUs. This makes self-hosting, like API use, a recurring operational cost. One-off expenditures, such as purchasing on-premises hardware, remain part of $C_\text{cap}$ (Sec. \ref{sec:operational-savings-and-return-of-investment}). Hybrid configurations are also covered by combining the two terms. \\

The \textbf{inference cost} $C^m_\text{tok}$ sums the per-event token costs over all events:
\begin{align}
C^m_\text{tok} = N \sum_{a\in\mathcal A} h_a\, p_a \sum_{k\in\mathcal K}\sum_{r\in\mathcal R} c^m_{k,r,a}
\end{align}
The \textit{per-event cost}~$c^m_{k,r,a}$ is defined as follows. An operator with role $r$ sends $\textit{req}_{k,r,a}$ requests to the model for task~$k$ on an event of criticality~$a$. Each request uses $\textit{tok}_{k,r,a}^\text{in}$ input tokens and $\textit{tok}_{k,r,a}^\text{out}$ output tokens, priced at $\gamma^{m,\text{in}}$ and $\gamma^{m,\text{out}}$ per million tokens, respectively. Thus, the per-event cost is
\begin{align}
c_{k,r,a}^m = 10^{-6}\cdot\textit{req}_{k,r,a}\cdot(\textit{tok}_{k,r,a}^\text{in}\cdot\gamma^{m,\text{in}} + \textit{tok}_{k,r,a}^\text{out}\cdot\gamma^{m,\text{out}})
\end{align} 

The \textbf{infrastructure cost} $C^m_{\text{infra}}$ over the analysis window $\Delta T$ (in hours) is
\begin{align}
C^m_{\text{infra}} = n^m_\text{gpu} \cdot c^m_\text{gpu} \cdot \Delta T
\end{align}
where $c^m_{\text{gpu}}$ is the amortized hourly cost per GPU for model $m$, and $n^m_{gpu}$ is the number of GPUs required to run it:
\begin{align}
\label{eq:ngpu}
n^m_{gpu} = \underbrace{\left\lceil \frac{2\delta_\text{tot}^m}{V^m_\text{gpu}} \right\rceil}_{n^m_\text{mem}} \cdot \underbrace{\left\lceil \frac{D^m}{\phi^m \cdot \Delta T \cdot 3600} \right\rceil}_{n^m_\text{rep}}
\end{align}

$n^m_\text{mem}$ is the \emph{minimum deployment unit}, or the smallest number of GPUs that can hold one \emph{replica} of model $m$ (one copy of the model loaded in memory and ready to serve requests). $\delta^m_\text{tot}$ is the \emph{total} number of model parameters. The factor of 2 reflects FP16/BF16 precision, which uses 2 bytes per parameter. $V^m_\textit{gpu}$ is the memory capacity, in bytes, of the GPU used for model $m$.

$n^m_\text{rep}$ is the \textit{replication factor}, or the number of model replicas required to sustain the total token demand $D^m$ per unit of time. This demand is defined as:
\begin{align}
D^m = N \sum_{a\in\mathcal A} h_a\, p_a \sum_{k\in\mathcal K} \sum_{r\in\mathcal R} \textit{req}_{k,r,a}\,(\textit{tok}_{k,r,a}^\text{in} + \textit{tok}_{k,r,a}^\text{out})
\end{align}
$\phi^m$ is the number of tokens per second (tok/s) that one replica can serve. Compute per token scales with the number of \emph{active} parameters, $\delta^m_{\text{act}}$, which equals $\delta^m_{\text{tot}}$ for dense models. We therefore approximate $\phi^m$ as inversely proportional to the active parameter count and directly proportional to the number of GPUs in a replica, $n^m_\text{mem}$. The approximation is anchored to a reference model with $\delta_\text{ref}$ parameters:
\begin{align}\label{eq:phi}
\phi^m = \phi_{\text{ref}}\cdot \frac{\delta_{\text{ref}}}{\delta^m_{\text{act}}}\cdot n^m_\text{mem}
\end{align}
Using one measured reference places all models on a common basis. However, this assumes that the reference model's efficiency transfers across model sizes and scales linearly with the number of GPUs in a replica, even though parallel efficiency is sublinear.

The model does not account for serving optimizations such as request batching or key-value caching. It therefore overestimates GPU requirements and provides an upper bound on cost.

\subsection{Operational Savings, Capital Expenditure Ratio, and Return on Investment (ROI)}
\label{sec:operational-savings-and-return-of-investment}
The \textit{operational cost savings} per unit of time from using LLM~$m$ are defined as:
 \begin{align}
 \label{eq:cost-savings}
 \Delta C^m \coloneqq C\nollm-C^m,
 \end{align}
where the terms are defined in~\eqref{eq:Cnollm} and \eqref{eq:Cm}.

We denote by $C_\text{cap}$ the per-unit-of-time equivalent of the \textit{additional capital cost}. It captures one-off adoption costs, including changes to workflows, procedures, and roles; software and infrastructure setup; and employee training. This cost is difficult to estimate, varies substantially across companies, and depends on the LLM framework adopted. Its precise assessment is outside our scope, so we treat it as a parameter in amortized form. For simplicity, we amortize these expenditures over three years, consistent with standard practice in IT and digital infrastructure economics \cite{EBA2020SoftwareRTS,IFRS2021IAS38} and reflecting rapid technological obsolescence.

We define the \emph{additional capital expenditure ratio} as:
\begin{align}
\label{eq:theta}
\theta \coloneqq C_\text{cap}/C\nollm.
\end{align}

The \textit{Return on Investment (ROI)} is defined as follows~\cite{UNESCO2020_ROI}:
\begin{align}
\label{eq:ROI}
\textit{ROI}^m \coloneqq \frac{\Delta C^m-C_\text{cap}}{C_\text{cap}} 
\end{align}

The following proposition gives an upper threshold on the capital expenditure ratio $\theta$ compatible with a target return $y$ on investment in LLM adoption.\footnote{Proof in Appendix~\ref{app:ROI-proof}.}

\begin{proposition}
\label{prop:ROI}
For any target return $y$, the ROI exceeds $y$ if and only if the capital expenditure ratio $\theta$ stays below a threshold:
\begin{align}
\textit{ROI}^m>y \iff \theta < \frac{\Delta C^m}{(1+y) \cdot C\nollm}
\end{align}
\end{proposition}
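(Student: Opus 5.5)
The plan is a direct chain of equivalences, starting from the definition of ROI in \eqref{eq:ROI} and ending at the definition of $\theta$ in \eqref{eq:theta}. Each step multiplies or divides both sides of a strict inequality by a quantity whose sign is known. The argument needs three standing positivity assumptions, which I would state explicitly at the start, since the statement's ``for any $y$'' implicitly depends on them.

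\textbf{Assumptions.} First, $C_\text{cap}>0$, which is needed for \eqref{eq:ROI} to be defined and meaningful. Second, $C\nollm>0$, which follows from \eqref{eq:Cnollm} because every $c_r>0$ and $C_\text{inv}\ge 0$, provided some event requires operator time. Third, the target satisfies $y>-1$, so that $1+y>0$. This last restriction is the only genuinely delicate point. If $y\le -1$, dividing by $1+y$ either reverses the inequality or is undefined, and the claimed equivalence can fail. For instance, when $\Delta C^m<0$ the ROI can lie below $-1$. So I would read ``any target return'' as any $y>-1$, which covers every economically relevant target, including break-even ($y=0$).

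\textbf{Steps.} First, rewrite \eqref{eq:ROI} as $\textit{ROI}^m = \Delta C^m/C_\text{cap} - 1$. Then $\textit{ROI}^m>y$ is equivalent to $\Delta C^m/C_\text{cap} > 1+y$. Second, multiply through by $C_\text{cap}>0$, which preserves the direction, to get $\Delta C^m > (1+y)\,C_\text{cap}$. Third, divide by $1+y>0$, giving $C_\text{cap} < \Delta C^m/(1+y)$. Fourth, divide by $C\nollm>0$ and substitute $\theta = C_\text{cap}/C\nollm$ from \eqref{eq:theta}. This yields $\theta < \Delta C^m/\bigl((1+y)\,C\nollm\bigr)$. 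Every step is reversible under the stated sign conditions, so the chain proves both directions of the ``if and only if'' at once.

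The main obstacle is therefore not computational but the bookkeeping of signs, in particular justifying $1+y>0$. I would handle it by stating the restriction $y>-1$ in the proof, with a remark that for $y\le-1$ the inequality flips or degenerates. I would also note that the threshold may be nonpositive when $\Delta C^m\le 0$. In that case no admissible $\theta>0$ achieves the target, which is consistent with the equivalence rather than an exception to it.
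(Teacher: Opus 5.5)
Your proof is correct and is essentially the paper's argument. The paper likewise rewrites $\textit{ROI}^m=\Delta C^m/C_\text{cap}-1=\Delta C^m/(\theta\,C\nollm)-1$ and solves the strict inequality for $\theta$; it simply substitutes $C_\text{cap}=\theta\,C\nollm$ before clearing denominators rather than after. The paper's final step uses the sign conditions you state explicitly ($C_\text{cap}>0$, $C\nollm>0$, $1+y>0$) without saying so, and your observation that the equivalence fails for $y\le -1$ is a correct refinement of the ``for any target return $y$'' wording.
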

The bound expresses the adoption decision as a threshold on $\theta$, which is known before adoption.

The model above does not account for the cost of incorrect answers that escape detection. If this occurs at an expected rate $\lambda$ per unit of time, each costing $C_\text{miss}$ on average, the expected savings become $\Delta C^m-\lambda \cdot C_\text{miss}$, and Proposition~\ref{prop:ROI} with $y=0$ bounds the rate that adoption can absorb:
\begin{align}
\label{eq:breach-budget}
\textit{ROI}^m>0 \iff \lambda < \lambda_{\max} = \frac{\Delta C^m-\theta\,C\nollm}{C_\text{miss}}
\end{align}

\subsection{Uncertainty Characterization}
Evaluating the economic impact of a technology that is not yet widely deployed in the organization under study is challenging because of \emph{uncertainty}. This uncertainty reflects both limited knowledge about the appropriate values of certain quantities and their day-to-day variability. We therefore use Probabilistic Sensitivity Analysis (PSA) via Monte Carlo simulation, a standard method in economic assessment and decision-making \cite{ProbabilisticSensitivityAnalysis2021}. All probabilities and risk measures reported below are estimated from $10^4$ independent Monte Carlo draws for each configuration.

We model the uncertain parameters as random variables using two types of distributions~\cite{Chadly2022UncertaintySimulationCostAnalyses, GeorgeWilliams2022ProbabilisticSmartHubs}:
\begin{itemize}
    \item \emph{Beta} distributions in $[0,1]$ model the speed ratio $\eta_{k,r,a}$ and accuracy $E^m$, as is standard for quantities bounded in this range.\footnote{A Beta's CV is capped at the maximum feasible for its mean.}
    \item \emph{Gamma} distributions in $[0,\infty)$ model operator effort $\tau_{k,a}$, event volume $N$, input and output token counts $\textit{tok}^{\text{in}}_{k,r,a}$ and $\textit{tok}^{\text{out}}_{k,r,a}$, request counts $\textit{req}_{k,r,a}$, and the rework penalty $\beta_{k,r,a}$. These distributions allow a non-zero probability of large values, such as very high token counts.
\end{itemize}

The remaining parameters are deterministic. The severity mix $p_a$ and true-positive rate $h_a$ describe the event stream. The organization sets the role allocation $\alpha_{a,r}$, per-role cost $c_r$, and analysis window $\Delta T$. The capital expenditure ratio $\theta$ is the decision variable and is known at adoption time. Technical parameters, including token prices $\gamma^{m,\text{in}}$ and $\gamma^{m,\text{out}}$, model sizes $\delta^m_\text{tot}$ and $\delta^m_\text{act}$, GPU memory $V^m_\text{gpu}$, throughput $\phi^m$, and GPU cost $c^m_\text{gpu}$, are fixed specifications selected during the design phase. The non-operational cost $C_\text{inv}$ cancels in $\Delta C^m$; we set it to zero in the instantiation.

We denote the means of these random variables by $\bar\eta_{k,r,a}$, $\bar\beta_{k,r,a}$, $\bar E^m$, $\bar\tau_{k,a}$, $\bar N$, $\bar{\textit{tok}}^{\text{in}}_{k,r,a}$, $\bar{\textit{tok}}^{\text{out}}_{k,r,a}$, and $\bar{\textit{req}}_{k,r,a}$. Their coefficients of variation (CVs), denoted by $\textit{CV}(\cdot)$, are the standard deviations divided by the means and represent the relative uncertainty of each parameter.\footnote{The conversion from means and CVs to Beta and Gamma sampling parameters is given in App.~\ref{app:par-proof}.}

We use a common CV for all uncertain parameters. Without empirical evidence about their relative uncertainty, assigning different CVs would impose an unsupported structure and suggest unwarranted precision. Instead, we treat the CV as a control variable and examine a range of values:
\begin{align}
\label{eq:uncertainty-levels}
&\text{low }(\textit{CV}{=}5\%),\quad \text{moderate }(\textit{CV}{=}15\%),\nonumber\\
&\text{high }(\textit{CV}{=}25\%),\quad \text{extreme }(\textit{CV}{=}35\%).
\end{align}
At higher values, the analysis is not informative because the output distributions overlap. This indicates that, beyond the extreme level, the economic benefits of LLM adoption cannot be reliably assessed.

The same principle applies to the means. For each parameter indexed by task, role, or severity, we use a common mean across indices unless evidence supports differentiating them, for example by severity. In each realization, we still sample each indexed value independently. Thus, common means avoid unsupported bias while independent draws capture uncertainty. The specific values are given in Sec.~\ref{sec:par_calibration} and summarized in Tab.~\ref{tab:input}.

\subsection{Profitability and Value at Risk}
We assess how confidently the benefits of LLM adoption can be established despite uncertainty about deployment and impact. We quantify this confidence using the \emph{probability of profitability}, defined as the probability of a positive return at a given uncertainty level:
\begin{align}
p_\textit{CV}(\theta) &\coloneqq \mathbb P\!\left(\textit{ROI}^m>0 \mid \textit{CV}\right) \nonumber\\
&= \mathbb P\!\left(\theta < \Delta C^m / C\nollm \mid \textit{CV}\right)
\end{align}
where the second equality follows from Proposition~\ref{prop:ROI} with $y{=}0$. This defines a decision rule for a risk-averse organization, which adopts LLMs only if the probability of profitability exceeds a required confidence level. At each uncertainty level~\eqref{eq:uncertainty-levels}, $p_\textit{CV}$ depends on the capital expenditure ratio $\theta$~\eqref{eq:theta}; adoption is warranted when $\theta$ falls below a corresponding threshold. Because $\theta$ is known before adoption, this decision can be made deterministically. Thus, epistemic uncertainty about LLM performance does not prevent a deterministic go/no-go decision.

We treat LLM performance as independent of $\theta$. A larger capital investment might improve performance; for example, better employee training could lower the speed ratio $\bar\eta_{k,r,a}$ or increase accuracy $\bar E^m$ through more effective prompting. We do not assume such an improvement, because no evidence currently quantifies the relationship between capital expenditure and LLM performance. The closest study we found examines training quality in a different domain and reports mixed effects~\cite{Feuerriegel2026}.

We quantify downside risk using the \textit{Value at Risk at level $z$} ($\text{VaR}_z$)~\cite{McNeilFreyEmbrechts2015}:
\begin{align}
\text{VaR}_z(\textit{ROI}) = -\inf\left\lbrace y\in\mathbb R : \mathbb P(\textit{ROI}<y) > z \right\rbrace
\end{align}
For example, $\text{VaR}_{10\%}(\textit{ROI})=20\%$ means there is a $10\%$ probability that LLM adoption will incur a loss of at least $20\%$ ($\textit{ROI}\le -20\%$) relative to non-adoption.

\section{Parameter Calibration}\label{sec:par_calibration}

\begin{table}
\centering
\footnotesize
\setlength{\tabcolsep}{3pt}
\caption{Model input parameters and their simulation setting. Distr.: B = Beta, G = Gamma, D = Deterministic.}
\label{tab:input}
\resizebox{\columnwidth}{!}{%
\begin{tabular}{@{}llllccc@{}}
\toprule
\textbf{Symbol} & \textbf{Name} & \textbf{Unit} & \textbf{Constraint} & \textbf{Distr.} & \textbf{Sec.} & \textbf{Tab.} \\
\midrule
\multicolumn{7}{@{}l}{\textbf{Model Instantiation}} \\
$\mathcal R$ & Roles & --- & $|\mathcal R|{=}3$ & D & \ref{sec:model_sets} & --- \\
$\mathcal A$ & Severity levels & --- & $|\mathcal A|{=}3$ & D & \ref{sec:model_sets} & --- \\
$\mathcal K$ & Tasks & --- & $|\mathcal K|{=}3$ & D & \ref{sec:model_sets} & --- \\
$\mathcal M$ & LLM archetypes & --- & $|\mathcal M|{=}6$ & D & \ref{sec:model_sets} & --- \\
\midrule
\multicolumn{7}{@{}l}{\textbf{Personnel Cost}} \\
$c_r$ & Personnel cost & \$/h & --- & D & \ref{sec:personnel_cost} & \ref{tab:soc_salary_detailed} \\
\midrule
\multicolumn{7}{@{}l}{\textbf{Alert Volume and Severity}} \\
$N$ & Alert volume & 1/day & --- & G & \ref{sec:alert_severity} & \ref{tab:alert-distribution} \\
$p_a$ & Severity share & --- & $\sum_a p_a{=}1$ & D & \ref{sec:alert_severity} & \ref{tab:alert-distribution} \\
$h_a$ & TP rate & --- & --- & D & \ref{sec:alert_severity} & \ref{tab:alert-distribution} \\
$\Delta T$ & Analysis window & h & $=24$ & D & \ref{sec:alert_severity} & --- \\
\midrule
\multicolumn{7}{@{}l}{\textbf{Analyst Effort and Allocation}} \\
$\tau_{k,a}$ & Effort & min & --- & G & \ref{sec:task_effort} & \ref{tab:effort-alloc} \\
$\alpha_{a,r}$ & Role allocation & --- & $\sum_r \alpha_{a,r}{=}1$ & D & \ref{sec:task_effort} & \ref{tab:effort-alloc} \\
\midrule
\multicolumn{7}{@{}l}{\textbf{LLM Efficiency and Rework}} \\
$\eta_{k,r,a}$ & Speed ratio & --- & $=\eta$ & B & \ref{sec:efficiency} & --- \\
$\beta_{k,r,a}$ & Rework penalty & --- & $=\beta$ & G & \ref{sec:efficiency} & --- \\
\midrule
\multicolumn{7}{@{}l}{\textbf{LLM Model Archetypes}} \\
$E^m$ & Accuracy & --- & --- & B & \ref{sec:archetypes} & \ref{tab:archetypes} \\
$\gamma^{m,\text{in/out}}$ & Token price & \$/Mtok & --- & D & \ref{sec:archetypes} & \ref{tab:archetypes}  \\
$\delta^m_\text{tot}$ & Tot size & --- & --- & D & \ref{sec:archetypes} & \ref{tab:archetypes}  \\
$\delta^m_\text{act}$ & Active size & --- & --- & D & \ref{sec:archetypes} & \ref{tab:archetypes}  \\
\midrule
\multicolumn{7}{@{}l}{\textbf{Infrastructure and Inference Demand}} \\
$c^m_\text{gpu}$ & GPU cost & \$/h & $=c_{gpu}$ & D & \ref{sec:infra_infer} & --- \\
$V^m_\text{gpu}$ & GPU memory & bytes & $=V_{gpu}$ & D & \ref{sec:infra_infer} & --- \\
$\phi^m$ & Throughput & tok/s & Eq.~\eqref{eq:phi} & D & \ref{sec:infra_infer} & --- \\
$\textit{tok}^{\text{in/out}}_{k,r,a}$ & Tokens & tok & $=\textit{tok}_a$ & G & \ref{sec:infra_infer} & \ref{tab:tokens} \\
$\textit{req}_{k,r,a}$ & Requests & --- & $=\textit{req}_a$ & G & \ref{sec:infra_infer} & \ref{tab:tokens} \\
\midrule
\multicolumn{7}{@{}l}{\textbf{Economic Parameters}} \\
$\theta$ & Capex ratio & --- & --- & D & \ref{sec:theta_calibration} & --- \\
$C_\text{inv}$ & Non-op.\ cost & \$/day & $=0$ & D & \ref{sec:theta_calibration} & --- \\
$\lambda$         & Miss rate & 1/day & $=0$               & D & \ref{sec:theta_calibration} & --- \\
\bottomrule
\end{tabular}}
\end{table}

This section reports the parameter values used in the evaluation (Tab. \ref{tab:input}). From this point onward, we use SOC terminology: events are \emph{alerts}, and their criticality is \emph{severity}. Where authoritative market benchmarks were available, we used them to set parameter values. The remaining values are based on expert judgment and are modeled as uncertain in the Monte Carlo simulation. For stochastic parameters, the value reported below is the mean of the corresponding distribution.

\subsection{Model Instantiation ($\mathcal R$, $\mathcal A$, $\mathcal K$, $\mathcal M$)} \label{sec:model_sets}
We instantiate the abstract sets in Sec.~\ref{sec:general-cost-model} using the entities in the reference SOC structure (Sec.~\ref{sec:reference-soc}). The role set contains the three analyst levels, $\mathcal R=\{\text{L1}, \text{L2}, \text{L3}\}$. The severity set is $\mathcal A=\{\text{low}, \text{medium}, \text{high}\}$. Critical alerts are excluded because we assume they are redirected to a dedicated incident-response team outside the SOC. The task set covers the detection and analysis activities in scope: triage, analysis, and documentation, denoted by $\mathcal K=\{\text{tri}, \text{ana}, \text{doc}\}$. The model set $\mathcal M$ contains six deployment archetypes: three API-based (\textsc{API-High}, \textsc{API-Mid}, and \textsc{API-Low}) and three self-hosted (\textsc{SH-High}, \textsc{SH-Mid}, and \textsc{SH-Low}). Together, they span the cost--accuracy--deployment space. Their calibrated values are given in Sec.~\ref{sec:archetypes}. Whether the GPUs are rented or owned affects only whether their cost is included in $C^m_\text{infra}$ or $C_\text{cap}$ (Sec.~\ref{sec:general-cost-model}).

\subsection{Personnel Cost ($c_r$)} \label{sec:personnel_cost}

\begin{table}[pos=b]
\centering
\caption{Personnel Cost $c_r$ (\$/h) by SOC role. Source: BLS OEWS May 2025 (SOC 15-1212); $c_r$: mean of the band's FL percentiles.}
\label{tab:soc_salary_detailed}
\begin{tabular*}{\columnwidth}{@{\extracolsep{\fill}}lcccc@{}}
\toprule
 & \textbf{P10} & \textbf{P25} & \textbf{P50} & \textbf{P75} \\
\midrule
Base & 36.1 & 47.0 & 62.1 & 78.6 \\
Fully Loaded  & 51.6 & 67.3 & 88.8 & 112.4 \\
\midrule
$c_{L1}$ & \multicolumn{2}{c}{\cellcolor{gray!20}59.4} &  &  \\
\addlinespace[3pt]
$c_{L2}$ &  & \multicolumn{2}{c}{\cellcolor{gray!20}78.0} &  \\
\addlinespace[3pt]
$c_{L3}$ &  &  & \multicolumn{2}{c}{\cellcolor{gray!20}100.6} \\
\bottomrule
\end{tabular*}
\end{table}

Salary data for SOC analysts come from the U.S. Bureau of Labor Statistics (BLS) Occupational Employment and Wage Statistics (OEWS) survey for Information Security Analysts (SOC code 15-1212, May 2025) \cite{bls_oews_2025}. We use these data because they report occupation-specific wage percentiles, which are needed for the mapping to L1--L3 described below.

Fully Loaded (FL) costs apply a $1.43\times$ burden multiplier to base salary. This accounts for employer payroll taxes, health insurance, retirement contributions, and paid time off, and is consistent with BLS estimates that benefits make up roughly 30\% of total compensation in private industry~\cite{bls_ecec_2026}. The resulting values are also in line with industry salary surveys for security operations roles~\cite{isc2_sscp_2024,indeed_soc_2025,glassdoor_soc_2025}.

We map the BLS percentiles to SOC levels as follows: L1 corresponds to P10--P25, L2 to P25--P50, and L3 to P50--P75. For each level, we use the mean hourly wage as the base salary input. Its Fully Loaded value is the per-role cost $c_r$ used in the model (Tab. \ref{tab:soc_salary_detailed}).

Emerging AI/ML roles, such as AI Security Engineer, are included in implementation costs. For operational expenditures, however, we treat AI/ML competencies as cross-functional skills within existing SOC layers rather than as dedicated positions. This approach is consistent with the NICE Framework, which defines AI Security as a Competency Area (NF-COM-002), not a Work Role~\cite{nist_nice_2026}.

\subsection{Alert Volume and Severity ($N$, $p_a$, $h_a$, $\Delta T$)} \label{sec:alert_severity}

\begin{table}
\centering
\caption{Alert volume $N$, alert distribution $p_a$ and true-positive rate $h_a$. Baseline $N=1{,}000$ alerts/day, critical incidents not in scope.}
\label{tab:alert-distribution}
\begin{tabular*}{\columnwidth}{@{\extracolsep{\fill}}lrrrr@{}}
\toprule
\textbf{Severity} & \textbf{Proportion} & \textbf{Alerts} & \textbf{TP Rate} & \textbf{Incidents} \\
$a\in\mathcal{A}$ & $p_a$ & $p_a N$ & $h_a$ & $p_a h_a N$ \\
\midrule
Low    & 0.85 & 850   & 0.15 & 128 \\
Medium & 0.13 & 130   & 0.50 & 65  \\
High   & 0.02 & 20    & 0.75 & 15  \\
\midrule
\textbf{Total} & --- & 1{,}000 & --- & \textbf{208} \\
\bottomrule
\end{tabular*}
\end{table}

Daily alert volume varies by orders of magnitude across SOCs, and no standardized benchmark exists. Reported volumes range from a few hundred to well over ten thousand alerts per day, depending on organization size and detection tuning~\cite{paloalto2020, vectra2023, thehackernews2025}. The median organization receives roughly 100 alerts per day, while the largest environments approach 1{,}000~\cite{prophet2025}. We use $N=1{,}000$ alerts/day as our baseline, representing large enterprises that are among the most likely candidates for AI adoption. Figure~\ref{fig:roivsN} and the sensitivity analysis in Sec.~\ref{sec:res_sensitivity} show how other alert volumes affect the results.
We set the analysis window to $\Delta T = 24$\,h (one day). All per-unit-of-time quantities refer to this interval.

We model three severity levels: low, medium, and high. The distribution $p_a$ is based on the authors' operational experience in security operations and was reviewed by senior SOC practitioners for representativeness. Critical incidents are excluded from the modeled workload because handling them varies substantially; we assume they are delegated to an external incident-response team (Sec.~\ref{sec:reference-soc}). At each severity level, we model only the true-positive fraction $h_a$ that reaches human analysts. Upstream pre-triage automation, including SOAR playbooks, auto-enrichment, and reputation lookups, handles raw event qualification. Thus, triage in $\mathcal K$ means that analysts validate and confirm the severity of alerts that reach them. The simulation values are reported in Tab.~\ref{tab:alert-distribution}.

\subsection{Analyst Effort and Allocation ($\tau_{k,a}$, $\alpha_{a,r}$)}\label{sec:task_effort}

Effort by analyst task is difficult to benchmark, and no standardized measurement method exists (Sec.~\ref{sec:reference-soc}). Available industry estimates vary widely across SOCs and mostly measure \emph{elapsed} time (MTTP, MTTA, MTTR)~\cite{vectra2023, ponemon_secops2026, prophet2025}, rather than the \emph{work effort} required by our model. They also provide no task-level breakdown. We therefore use effort values based on operational experience, reviewed by senior SOC practitioners for representativeness, and treat them as uncertain in the Monte Carlo analysis. We assume that each alert's total effort ($\tau_a$) is divided among tasks as follows: 20\% for triage, 50\% for analysis, and 30\% for documentation. Analysis receives the largest share because it involves the investigative work. We also allocate effort across roles according to severity. L1 analysts handle most low-severity alerts, L1 and L2 share medium-severity alerts, and L2 and L3 carry most of the high-severity effort. This allocation reflects the escalation of more complex work to more senior analysts (Sec.~\ref{sec:reference-soc}). Tab.~\ref{tab:effort-alloc} reports the resulting per-task effort $\tau_{k,a}$ and role allocations $\alpha_{a,r}$.

\begin{table}
\centering
\caption{Per-alert effort $\tau_{k,a}$ (min, converted to hours in the model) and role allocation $\alpha_{a,r}$ by severity. $\tau_a=\sum_k\tau_{k,a}$ and $\sum_r\alpha_{a,r}=1$.}
\label{tab:effort-alloc}
\resizebox{\columnwidth}{!}{%
\begin{tabular}{lrrrrrrr}
\toprule
\multirow{2}{*}{\textbf{Severity}} & \multicolumn{4}{c}{\textbf{Effort $\tau_{k,a}$ (min)}} & \multicolumn{3}{c}{\textbf{Allocation $\alpha_{a,r}$}} \\
\cmidrule(lr){2-5} \cmidrule(lr){6-8}
 & $\tau_a$ & $\tau_{\text{tri},a}$ & $\tau_{\text{ana},a}$ & $\tau_{\text{doc},a}$ & $\alpha_{a,L1}$ & $\alpha_{a,L2}$ & $\alpha_{a,L3}$ \\
\midrule
Low    & 10 & 2  & 5  & 3  & 0.8 & 0.2 & 0.0 \\
Medium & 30 & 6  & 15 & 9  & 0.5 & 0.5 & 0.0 \\
High   & 60 & 12 & 30 & 18 & 0.1 & 0.6 & 0.3 \\
\bottomrule
\end{tabular}%
}
\end{table}

\subsection{LLM Efficiency and Rework ($\eta$, $\beta$)}\label{sec:efficiency}

No standardized benchmark measures LLM efficiency in security operations. Reported gains vary widely with the task, user expertise, and measurement setup. Available figures mostly measure elapsed time rather than analyst work effort, provide no task- or role-level breakdown, and depend heavily on the context in which they were measured.

Across knowledge work, controlled studies report time savings ranging from about 14\% in customer support~\cite{brynjolfsson_generative_2023} to 26--56\% in software development~\cite{cui_effects_2025, peng_impact_2023}. In security operations, a ten-month field study of 45 analysts found that LLMs were adopted as sensemaking aids and reduced workload~\cite{singh_llms_2025}. A randomized trial found that analysts worked 22--26\% faster~\cite{edelman_randomized_2024}, while live-operations telemetry associated LLM assistance with an approximately 30\% reduction in Mean Time to Resolve~\cite{bono_generative_2024}. We use these findings only to bound a single aggregate speed ratio.

We also assume that LLM adoption does not change the SOC's task or role structure. The task set $\mathcal K$ and role set $\mathcal R$ are the same with and without LLMs ($\mathcal K^{noLLM}{=}\mathcal K^m$, $\mathcal R^{noLLM}{=}\mathcal R^m$). Any additional operational activities introduced by LLMs, such as prompting and output validation, are incorporated into the coefficients in this section. These coefficients therefore represent performance net of that overhead.

Accordingly, we adopt a single, aggregate \textit{speed ratio} $\eta$, common to all tasks, roles, and severities:
\begin{equation}
\eta_{k,r,a} = \eta, \qquad \forall\, k\in\mathcal K,\ r\in\mathcal R,\ a\in\mathcal A .
\end{equation}
We write $\eta = 1 - \epsilon$, where $\epsilon$ is the fractional time reduction reported in the literature. Evidence from security operations places $\epsilon$ between $22$\% and $30$\%, corresponding to $\eta \in [0.70, 0.78]$. We use $\eta = 0.75$, equivalent to a $25\%$ reduction, which is approximately the midpoint of this range. We treat $\eta$ as uncertain in the Monte Carlo analysis.

The evidence above does not isolate the effort required to correct incorrect LLM output. Our model captures this effort through the rework penalty $\beta$. The cost may be more than additive: LLMs can omit critical details or introduce factual errors, and participants found inaccurate drafts harder to edit than writing from scratch~\cite{kramer2025integrating}. A draft can therefore have a negative net contribution. Our model captures this discontinuity through the interaction between accuracy ($E^m$) and the rework penalty ($\beta$). The downside can be substantial. A controlled trial found that experienced developers were $19\%$ slower with AI assistance~\cite{becker_measuring_2025}, and a security-operations trial found that experts were about $26\%$ slower on a response task~\cite{edelman_randomized_2024}. Errors and over-reliance contribute to this effect; users may accept incorrect output at moderate levels of familiarity with AI~\cite{horowitz_bending_2024}.

Because no benchmark is available, we use a common \textit{rework penalty} $\beta$ across tasks, roles, and severities:
\begin{equation}
\beta_{k,r,a} = \beta, \qquad \forall\, k\in\mathcal K,\ r\in\mathcal R,\ a\in\mathcal A
\end{equation}
As our reference, we use the conservative case $\beta=\eta=0.75$. In this case, an incorrect answer is detected only after the LLM-assisted task is complete and requires a full manual redo. The time spent on the assisted attempt is therefore added to the redo time.

We also examine two limiting regimes in the simulations. When $\beta=0$, incorrect outputs are immediately recognizable and do not increase redo time. When $\eta=0$, correct outputs require no residual analyst effort. In each regime, the parameter set to zero is deterministic.

\subsection{LLM Archetypes ($\mathcal M$, $E^m$, $\gamma^{m,\text{in/out}}$, $\delta^m_\text{tot}$, $\delta^m_\text{act}$)}\label{sec:archetypes}

\begin{table}
\centering
\scriptsize
\setlength{\tabcolsep}{3pt}
\caption{Selected benchmarks: domain, reference, and shared-model overlap (blue heatmap; diagonal = models per benchmark, off-diagonal = models in common).}
\label{tab:benchmarks}
\resizebox{\columnwidth}{!}{%
\begin{tabular}{@{}l l l l cccccc@{}}
\toprule
\multirow{2}{*}{\textbf{ID}} & \multirow{2}{*}{\textbf{Benchmark}} & \multirow{2}{*}{\textbf{Ref}} & \multirow{2}{*}{\textbf{Type}} & \multicolumn{6}{c}{\textbf{Shared models}} \\
\cmidrule(l){5-10}
 & & & & CSE & SB & CTI & CM & DF & CS \\
\midrule
CSE & CyberSOCEval & \cite{deason_cybersoceval_2025} & CTI/Malw. & \cellcolor{blue!32}7 & \cellcolor{blue!8}1 & 0 & \cellcolor{blue!8}1 & \cellcolor{blue!12}2 & \cellcolor{blue!8}1 \\
SB  & SecBench     & \cite{jing_secbench_2025}       & Know.     & \cellcolor{blue!8}1 & \cellcolor{blue!48}16 & \cellcolor{blue!16}3 & \cellcolor{blue!16}3 & \cellcolor{blue!20}4 & \cellcolor{blue!12}2 \\
CTI & CTIBench     & \cite{alam_ctibench_2024}       & CTI       & 0 & \cellcolor{blue!16}3 & \cellcolor{blue!25}5 & \cellcolor{blue!12}2 & 0 & \cellcolor{blue!8}1 \\
CM  & CyberMetric  & \cite{tihanyi_cybermetric_2024} & Know.     & \cellcolor{blue!8}1 & \cellcolor{blue!16}3 & \cellcolor{blue!12}2 & \cellcolor{blue!70}25 & \cellcolor{blue!8}1 & \cellcolor{blue!12}2 \\
DF  & DFIR-Metric  & \cite{cherif_dfir-metric_2025}  & DFIR/IR   & \cellcolor{blue!12}2 & \cellcolor{blue!20}4 & 0 & \cellcolor{blue!8}1 & \cellcolor{blue!42}14 & \cellcolor{blue!8}1 \\
CS  & CS-Eval      & \cite{yu_cs-eval_2025}          & Know.     & \cellcolor{blue!8}1 & \cellcolor{blue!12}2 & \cellcolor{blue!8}1 & \cellcolor{blue!12}2 & \cellcolor{blue!8}1 & \cellcolor{blue!60}20 \\
\bottomrule
\end{tabular}%
}
\end{table}

LLM performance evaluation is an active field that includes general benchmarks, leaderboards, and cost-versus-quality rankings~\cite{chang_survey_2024, artificialanalysis}. Our aim here is simply to identify plausible values for the simulations.

The number of domain-specific benchmarks for security operations is also growing. We retained benchmarks that meet four criteria: (i) they apply to SOC tasks, (ii) they report model accuracy, (iii) they rely primarily on multiple-choice (MCQ) questions, and (iv) they are documented benchmarks rather than vendor blog posts. Six benchmarks meet these criteria~\cite{deason_cybersoceval_2025, jing_secbench_2025, alam_ctibench_2024, tihanyi_cybermetric_2024, cherif_dfir-metric_2025, yu_cs-eval_2025}. They cover security knowledge, cyber-threat intelligence, and digital forensics and incident response (Tab.~\ref{tab:benchmarks}).

We combined the benchmark results into a list of 71 unique models (Appendix~\ref{appendix: model_params}). We standardized model names across benchmarks while keeping distinct variants separate, such as Llama-3-8B and Llama-3-70B, GPT-4o and GPT-4o-mini, and DeepSeek-V2 and DeepSeek-V3. For each model, we averaged the reported accuracy and recorded input and output token prices as of September 2026. For deprecated models with no published prices, we used the most recent prices archived by the Internet Archive.

These benchmarks are imperfect proxies. Many evaluated models have become obsolete since the benchmarks were released. The benchmarks also cover largely distinct model sets, with a mean pairwise Jaccard similarity of approximately $0.06$ \cite{jaccard1912}, computed from Tab.~\ref{tab:benchmarks}. In addition, they report different accuracy values for the same model and do not provide prices from the time of evaluation. Among API models, price and accuracy show no monotonic association (Spearman $\rho = 0.04$, close to zero~\cite{spearman1904}); several low-cost models match or outperform more expensive ones.

For these reasons, we do not rely on individual models. Instead, we represent the collected set with a small number of archetypes that span the cost--accuracy space and provide inputs for the simulation. For each deployment mode, we select representative models near the $90$th, $50$th, and $20$th accuracy percentiles. We then use each selected model's token prices, parameter counts, and size (Tab.~\ref{tab:archetypes}). The archetypes cover a range of accuracy levels, costs, and parameter counts, and include both dense and Mixture-of-Experts (MoE) models. They provide illustrative inputs, not a ranking of models, since effectiveness depends on the task (Sec.~\ref{sec:limitations}). Figure~\ref{fig:profit_reg} sweeps $\bar E^m$ across its full range, so the profitability condition can be evaluated at any accuracy level, regardless of the value assigned to a specific model.

\begin{table}
\centering
\footnotesize
\setlength{\tabcolsep}{3pt}
\caption{Illustrative LLM archetypes and their simulation parameters. $n^m_\text{gpu}$ is the number of H100s to serve each model per Eq.~\eqref{eq:ngpu}.}
\label{tab:archetypes}
\begin{tabular*}{\columnwidth}{@{\extracolsep{\fill}}l c c c c c c@{}}
\toprule
\textbf{Archetype} & $\boldsymbol{E^m}$ & $\boldsymbol{\gamma^{m,\text{in}}}$ & $\boldsymbol{\gamma^{m,\text{out}}}$ & $\boldsymbol{\delta^m_\text{tot}}$ & $\boldsymbol{\delta^m_\text{act}}$ & $\boldsymbol{n^m_\text{gpu}}$ \\
\midrule
\makecell[l]{API-High\\{\scriptsize\textcolor{black!55}{\textit{(Grok-3)}}}} & 0.92 & \$3 & \$15 & -- & -- & -- \\
\makecell[l]{API-Mid\\{\scriptsize\textcolor{black!55}{\textit{(gpt-4o-mini)}}}} & 0.87 & \$0.15 & \$0.60 & -- & -- & -- \\
\makecell[l]{API-Low\\{\scriptsize\textcolor{black!55}{\textit{(claude-3-7-sonnet)}}}} & 0.58 & \$3 & \$15 & -- & -- & -- \\
\makecell[l]{SH-High\\{\scriptsize\textcolor{black!55}{\textit{(Falcon-180B-Chat)}}}} & 0.88 & -- & -- & 180 & 180 & 5 \\
\makecell[l]{SH-Mid\\{\scriptsize\textcolor{black!55}{\textit{(Qwen-1.5-MoE-A2.7B-Chat)}}}} & 0.75 & -- & -- & 14 & 2.7 & 1 \\
\makecell[l]{SH-Low\\{\scriptsize\textcolor{black!55}{\textit{(llama-3-8b)}}}} & 0.45 & -- & -- & 8 & 8 & 1 \\
\bottomrule
\end{tabular*}
\end{table}
\subsection{Infrastructure and Inference Demand ($c_\text{gpu}$, $V_\text{gpu}$, $\phi^m$, $\mathit{tok}^{\mathrm{in/out}}_a$, $\mathit{req}_a$)}\label{sec:infra_infer}

This subsection calibrates technology costs, which combine the serving infrastructure required for self-hosted deployments with the model's inference workload.

We size the infrastructure using a single GPU type, the NVIDIA H100 ($V_\text{gpu}$ = 80~GB), and assume that its cost is the same across models ($c^m_\text{gpu}=c_\text{gpu}$). We set $c_{\text{gpu}} = \$8$/h, the average on-demand list price for a single 80 GB H100 across the three major cloud providers as of September 2026.\footnote{Reference prices: AWS p5.4xlarge at \$6.88/h~\cite{aws_p5_pricing_2026}, Azure NC40ads H100 v5 at \$6.98/h~\cite{azure_nc_h100_pricing_2026}, and Google Cloud a3-highgpu-1g at \$10.92/h~\cite{gcp_gpu_pricing_2026}.}
We anchor per-replica throughput to an 8B dense reference model, Llama-3.1-8B, and use a conservative rate of $\phi_{\text{ref}}=3{,}000$ tok/s at $\delta_{\text{ref}}=8$\,B.\footnote{This is consistent with MLPerf Inference~v5.1, which reports about $5{,}100$ output tok/s on a single H100 in FP8~\cite{mlperf_inference_2025}.} Applying this rate equally to input and output tokens overestimates the required number of GPUs, consistent with the upper-bound cost estimate in Sec.~\ref{sec:general-cost-model}. Tab.~\ref{tab:archetypes} reports the resulting $n^m_\text{gpu}$ for each model.

We assume that the number of requests per alert and the number of tokens per request depend on severity, but not on task or role. Thus, $\sum_{k,r}\mathit{req}_{k,r,a}=\mathit{req}_a$ and $\mathit{tok}^{\mathrm{in/out}}_{k,r,a}=\mathit{tok}^{\mathrm{in/out}}_a$. Both quantities increase with severity to reflect the deeper, multi-step investigation required for higher-impact incidents. We use a $4{:}1$ input-to-output ratio because contextual input, such as alert payloads and logs, dominates the concise generated output. These assumptions are based on operational experience and were reviewed by senior SOC practitioners. Tab.~\ref{tab:tokens} reports the simulation values.

\begin{table}
\centering
\footnotesize
\caption{Simulation values for the per-severity inference parameters ($\mathit{req}_a$, $\mathit{tok}^{\mathrm{in}}_a$, $\mathit{tok}^{\mathrm{out}}_a$), at a 4:1 input-to-output ratio.}
\label{tab:tokens}
\begin{tabular*}{\columnwidth}{@{\extracolsep{\fill}}lccc@{}}
\toprule
\textbf{Severity} $a$ & \textbf{Requests} $\mathit{req}_a$ & \textbf{Input} $\mathit{tok}^{\mathrm{in}}_a$ & \textbf{Output} $\mathit{tok}^{\mathrm{out}}_a$ \\
\midrule
Low    & 5  & 8K  & 2K \\
Medium & 10 & 12K & 3K \\
High   & 15 & 16K & 4K \\
\bottomrule
\end{tabular*}
\end{table}

\subsection{Economic Parameters ($\theta$, $C_{inv}$,  $\lambda$)}\label{sec:theta_calibration}
The economic parameters are deterministic. The decision-maker sets the capital expenditure ratio $\theta$, which is known at the time of adoption. Rather than choosing a single value, we sweep $\theta$ across its admissible range and report profitability and VaR as functions of it. We set the non-operational cost $C_{\mathit{inv}}=0$ because common cost components cancel in the cost differential. Under this assumption, $\theta$ expresses capital expenditure relative to the personnel cost affected by the LLM. We likewise set the baseline miss rate to zero ($\lambda=0$), since the miss term enters only as an additive offset on the savings.

\section{Results} 
We exercise the model on the SOC instantiation of Sec.~\ref{sec:par_calibration}, spanning realistic operating points with the archetypes of Tab.~\ref{tab:archetypes}. We show how the model can turn cost, accuracy, and uncertainty into an actionable adoption decision. Reporting the ROI as a full distribution over the capital expenditure ratio $\theta$, we locate the boundary between value-creating and loss-making configurations together with the loss risked on the wrong side, and the levers that move a configuration across it.

\subsection{Profitability under uncertainty}\label{sec:res_profit}

Figure~\ref{fig:proi} reports $P(\mathrm{ROI}>0)$ over the capex ratio $\theta$ as CV varies; the shaded band ($P(ROI>0)\ge 0.8$) marks high-confidence profitability. Every curve is monotone decreasing in $\theta$. Increasing CV broadens the transition, from a sharp threshold at CV$=5\%$ to a gradual decline at CV$=35\%$, and reduces the dispersion across archetypes.

\begin{figure*}
    \centering
    \includegraphics[width=1\linewidth]{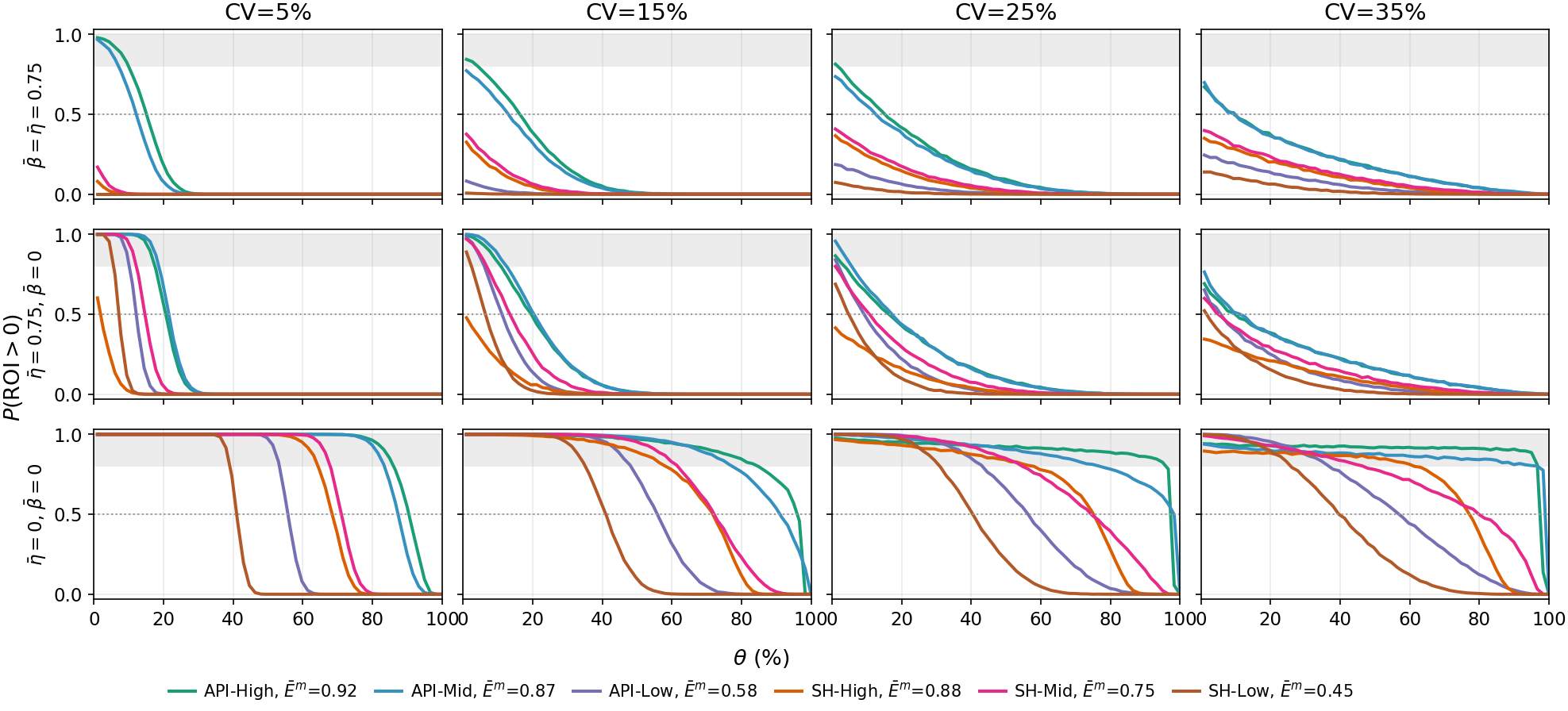}
    \caption{Probability that adoption is profitable $P(\mathrm{ROI}>0)$, as $\theta$ and CV vary. Rows: scenario $(\bar\eta,\bar\beta)$. Shaded band: $P\ge 0.8$.}
    \label{fig:proi}
\end{figure*}

\begin{figure}
    \centering
    \includegraphics[width=1\linewidth]{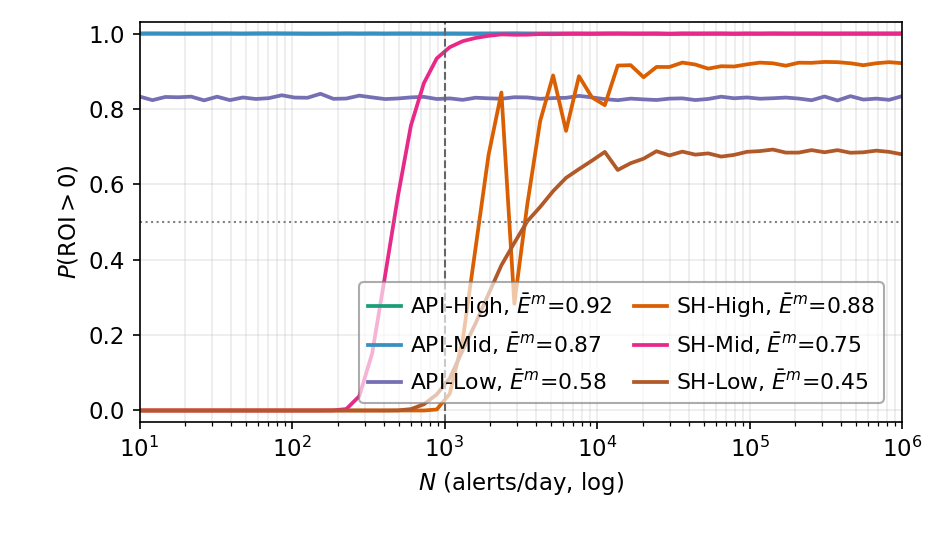}
    \caption{$P(\mathrm{ROI}>0)$ as the daily alert volume $N$ (log scale) varies, per archetype. $\bar\theta{=}10\%$,  $\bar\eta{=}0.75$, $\bar\beta{=}0$, CV${=}5\%$. Dashed line: baseline $\bar N{=}1000$.}
    \label{fig:roivsN}
\end{figure}

In the conservative case $\bar\beta=\bar\eta$ (top row: a wrong answer is caught late, costing the wasted attempt plus a full redo), only high-accuracy APIs enter the band at the lowest capex and uncertainty ($\theta<10\%$ at CV$=5\%$). With non negligible uncertainty (CV $\ge25\%$), no models guarantee profitability with high confidence.

In the $\bar\beta=0$ regime (mid row: a wrong output's impact is contained to its manual redo), the confident range extends to $\theta<20\%$ (CV$=5\%$), down to $10\%$ (CV$=25\%$), with API-Low and SH-Mid also entering the band at low capex. 

Under full automation $\bar\eta=0$ (bottom row: a cognitive-SOC regime with autonomy and human escalation only on critical cases) the confident range jumps to $\theta\approx 80$--$90\%$ for the leading models, and every archetype tolerates a far larger capex than in the previous scenarios. This scenario is also the most robust to uncertainty, holding high $P$ at large $\theta$ even at high CV and for weaker models. 

\begin{figure}
    \centering
    \includegraphics[width=1\linewidth]{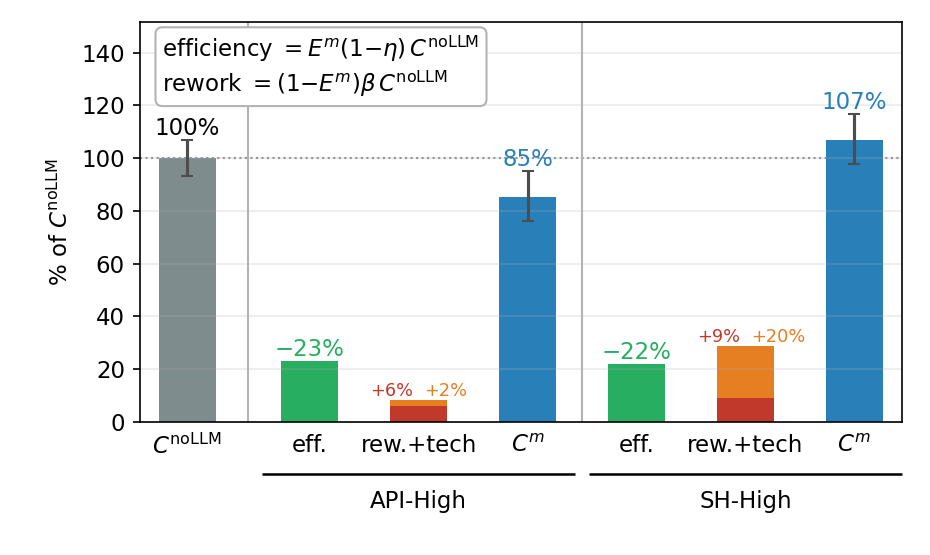}
    \caption{Example cost with LLM $C^m$ as \% of the no-LLM baseline $C^\mathrm{noLLM}$. $\bar\theta{=}10\%$, $\bar\beta{=}\bar\eta$, CV${=}5\%$, $\bar N{=}1000$. Whiskers: 10--90 percentile.}
    \label{fig:cost}
\end{figure}

Volume N impacts the decision only for self-hosted deployments (Fig.~\ref{fig:roivsN}): API $P(\mathrm{ROI}>0)$ is flat in $N$, since usage is billed per token and the return depends on accuracy and price. Self-hosted archetypes are a separate case: they carry a fixed GPU capex that is hard to amortize at such low alert volumes ($\bar N=1000$), with SH-High turning competitive at this volume only under full autonomy ($\bar\eta=0$). Self-hosted $P(\mathrm{ROI}>0)$ grows with $N$ in a sawtooth: each time the volume exceeds a replica's throughput an additional GPU replica is provisioned, a discrete capex step that depresses it before it is absorbed, most visibly where the per-replica GPU block is large. At high volume the plateau height is also affected by the GPU count, not accuracy alone, with SH-High, the most accurate self-hosted model, settling lower than SH-Mid because it runs on more GPUs.

A cost breakdown for the most accurate API and self-hosted archetypes is shown in Fig.~\ref{fig:cost}. Despite comparable efficiency savings, the two differ in the cost the LLM adds. For API-High this addition is negligible and comes mostly from rework rather than from the token cost, so the total stays below the baseline, a net saving. For SH-High the technology cost, driven by the GPU infrastructure, absorbs the efficiency saving and pushes the total above the baseline, a net loss.

Overall, in the assisted regime ($\bar\eta>0$) adoption is profitable with confidence only at low capex, and the capex that adoption can bear responds far more to automation ($\bar\eta\to 0$) than to the error cost ($\bar\beta\to 0$); Sec.~\ref{sec:res_sensitivity} ranks both against the remaining drivers.

\subsection{Downside risk}\label{sec:res_var}

\begin{figure*}
    \centering
    \includegraphics[width=1\linewidth]{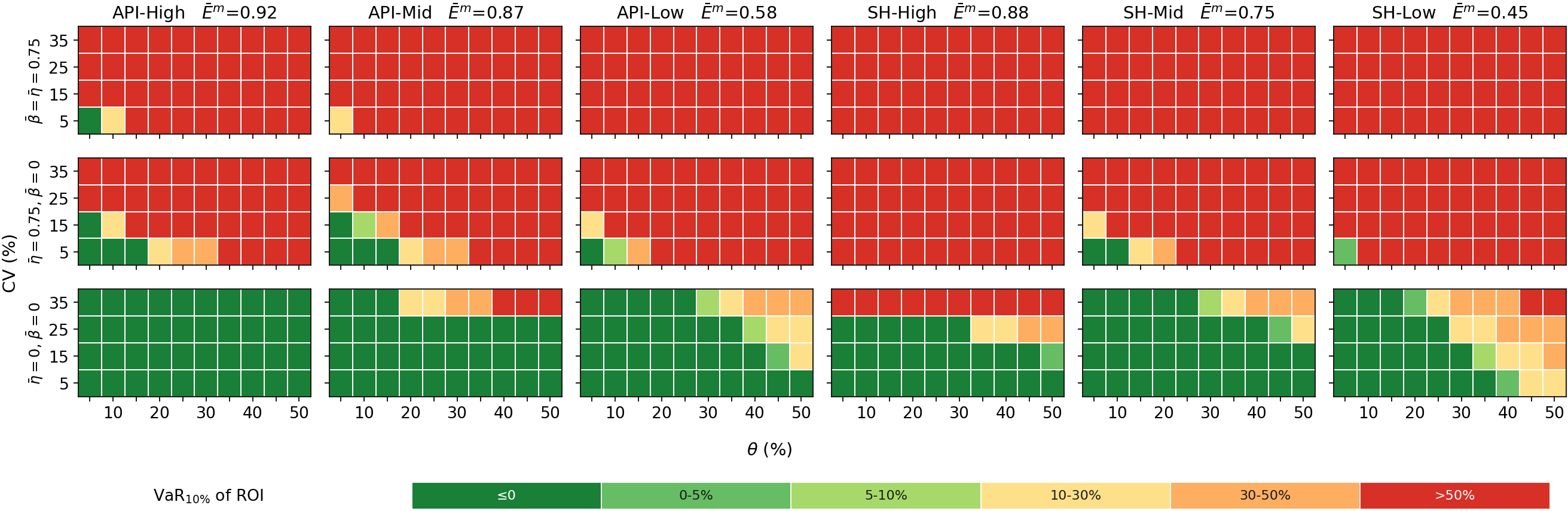}
    \caption{Downside risk of adoption $\mathrm{VaR}_{10\%}(\mathrm{ROI})$ (worst-decile loss), as $\theta$ and CV vary. Rows: mode $(\bar\eta,\bar\beta)$. Col.: archetypes $(\bar{E}^m)$.}
    \label{fig:var10}
\end{figure*}

Figure~\ref{fig:var10} turns from the probability of profit to its downside: $\mathrm{VaR}_{10\%}$ of the ROI, the fraction of the committed capex lost in the worst decile of outcomes (with $10\%$ probability). Green cells ($\mathrm{VaR}_{10\%}\le 0$) mark where even that worst decile is still profitable, red cells a severe loss. 
 
The scenario ordering of Fig.~\ref{fig:proi} carries over to the downside: the safe (green) region widens from the conservative scenario ($\bar\beta=\bar\eta$), through $\bar\beta=0$, to full automation ($\bar\eta=0$), and shrinks as $\theta$ or CV grows. The $\mathrm{VaR}_{10\%}$ ranking tracks profitability when API and self-hosted archetypes are compared separately. Within the self-hosted archetypes, the GPU cost outweighs the accuracy gain, so the larger SH-High is less safe than SH-Mid.

Beyond the safe region, the graded bands give the size of the loss: in the conservative scenario the worst decile loses more than half the committed capex across almost the whole grid. Under full automation, the worst decile stays profitable up to high $\theta$ over most of the grid. For a risk-averse SOC this sets the capex ceiling: adoption is defensible only where the worst-decile loss stays below the loss the organization is willing to bear.

\subsection{Parameter sensitivity}\label{sec:res_sensitivity}

\begin{figure}[pos=b]
    \centering
    \includegraphics[width=1\linewidth]{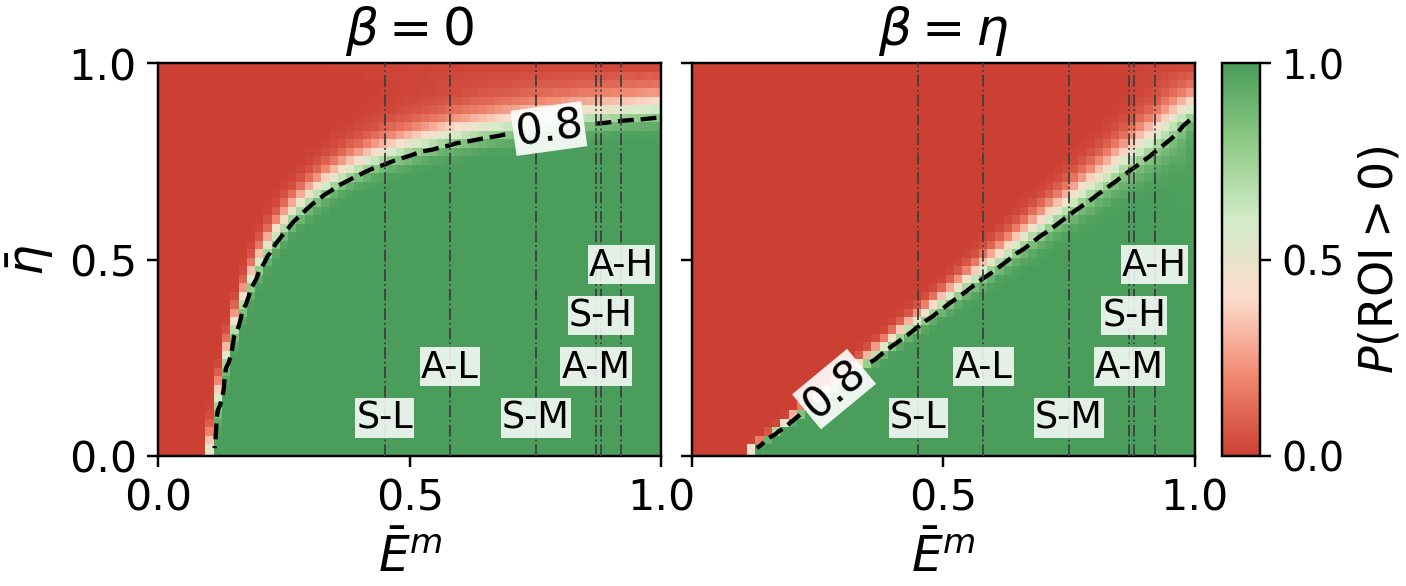}
    \caption{Profit region in the $(\bar{E}^m,\bar\eta)$ plane; $C^m_\text{tech}=0$, $\mathrm{CV}=5\%$, $\bar\theta=10\%$. Dash-dot lines: archetypes, coded by deployment (A: API, S: self-hosted) and accuracy tier (H/M/L).}
    \label{fig:profit_reg}
\end{figure}

In this section, we quantify the impact of each parameter on the value an LLM deployment creates. As a first step, we set the technology cost to zero ($C^m_\text{tech}=0$) to isolate the operational core of the decision: accuracy $\bar{E}^m$ and speed ratio $\bar\eta$. Figure~\ref{fig:profit_reg} maps $P(\mathrm{ROI}>0)$ over that plane at $\mathrm{CV}=5\%$ and $\bar\theta=10\%$, showing when the two leave a positive return. A deployment is profitable only above an $\bar{E}^m$ threshold (dashed line) that rises with $\bar\eta$, since effort left to the analyst has to be paid back by accuracy. The rework penalty $\beta$ sets where this threshold lies: in relation to the archetypes' accuracies (vertical dash-dot lines), at high rework ($\beta=\bar\eta$) the admissible $\bar\eta$ spreads across the full range, whereas at $\beta=0$ these ceilings collapse into a narrow $[0.7$--$0.8]$ band.

Restoring the technology cost $C^m_\text{tech}$, we run a sensitivity analysis measuring how much each operational parameter shifts the baseline ROI set by the calibration values (Fig.~\ref{fig:tornado}).  

For the high- and mid-accuracy models the speed ratio $\bar\eta$ is the dominant lever, while the rework penalty $\bar\beta$ is markedly weaker. For the low-accuracy models $\bar\beta$ becomes comparable to $\bar\eta$, and for SH-Low it is the largest lever. As in Fig.~\ref{fig:roivsN}, volume $\bar N$ does not affect the API return and moves only the self-hosted models, most visibly SH-High.

In a SOC, an incorrect answer that escapes detection and is exploited by an attacker is a breach; this risk grows under full automation, where outputs are not individually reviewed. Eq.~\eqref{eq:breach-budget} then bounds the breach rate adoption can sustain. For example, if we set $C_\text{miss}$ to the average U.S. breach cost of \$10.22 million~\cite{IBM2025}, at mean values ($\bar\eta=0$, $\bar\beta=0$, $\bar\theta=10\%$, CV$=5\%$), API-High gives $1/\lambda_{\max}\approx 7$ years. A SOC should enable full automation only where its breach rate stays below $\lambda_{\max}$.

\begin{figure*}
    \centering
    \includegraphics[width=1\linewidth]{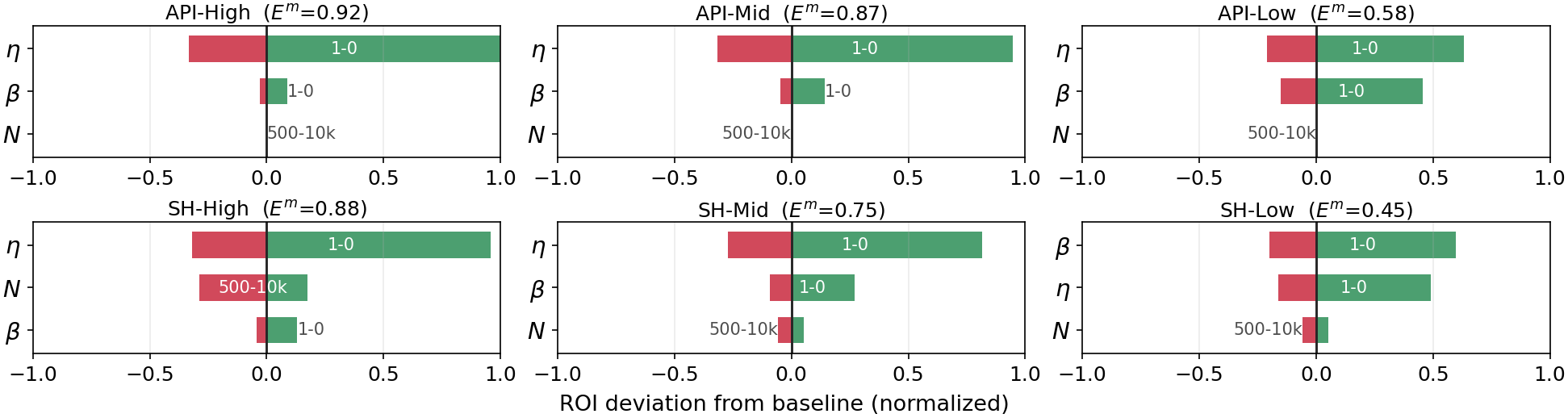}
    \caption{ROI sensitivity analysis per archetype, as deviation from the baseline operating point ($\bar\eta=\bar\beta=0.75$, $\bar\theta=10\%$, $\bar N=1000$); shared normalized scale, $0$ = baseline; green above baseline, red below.}
    \label{fig:tornado}
\end{figure*}

\section{Discussion}
Our analysis shows that only a few factors substantially influence the return on LLM adoption. These factors represent three operational levers that decision-makers can control:
\begin{itemize}[leftmargin=*]
    \item \textbf{Model choice} ($E^m$, accuracy): the selected model's correctness on the target task, including its tendency to hallucinate. This characteristic is determined when the model is chosen rather than tuned during operation.
    \item \textbf{Level of automation} ($\eta$, speed ratio): the share of analyst time still required for the assisted task. A lower value means that the model operates more independently, potentially reaching fully agentic execution.
    \item \textbf{Level of verifiability} ($\beta$, rework penalty): how readily an analyst recognizes an incorrect output before acting on it, rather than redoing the task from scratch. More verifiable outputs make errors less costly.
\end{itemize}

Our simulations identify automation as the main lever for turning a loss into a profit. Returns are far more sensitive to the speed ratio $\eta$ than to the rework penalty $\beta$ (Fig.~\ref{fig:tornado}). The autonomy regime ($\bar\eta\to0$) is the only one that can accommodate high capital expenditure while remaining profitable even in its worst-decile outcomes. However, this level of autonomy requires sufficient accuracy: only a model precise enough to be trusted without full human re-verification can be operated this way (Fig.~\ref{fig:profit_reg}). At the calibrated operating point, where this degree of autonomy is absent, most configurations do not pay off. Under the conservative regime ($\bar\beta=\bar\eta$), only the high-accuracy API archetypes achieve high-confidence profitability, and then only at low capital expenditure $\theta$ and low uncertainty $\mathrm{CV}$, while every other configuration falls short of it (Fig.~\ref{fig:proi}, Fig.~\ref{fig:var10}). This finding is consistent with reports that enterprise generative AI often fails to convert pilots into value~\cite{MIT2025, SPGlobal2025, McKinsey2024}. The importance of autonomy also aligns with market trends toward agentic, human-in-the-loop security operations and model specialization~\cite{Gartner2026}. Adoption therefore pays off only as security operations move toward the cognitive-SOC regime, where the model handles most of the work and analysts address the remainder.

Price has only a marginal effect in our simulations. For API deployments, token costs account for a small share of the total (Fig.~\ref{fig:cost}). Moreover, price and accuracy are uncorrelated across the surveyed models (Spearman $\rho=0.04$, Sec.~\ref{sec:archetypes}), suggesting that a model's price reflects its market positioning more than its capabilities.

Alert volume $N$ has little effect on returns from API deployments because costs and savings scale together, leaving the ratio unchanged. By contrast, self-hosted deployment rarely helps. At realistic alert volumes, fixed GPU infrastructure costs outweigh the efficiency savings and push total costs above the no-LLM baseline (Fig.~\ref{fig:cost}). These costs are amortized only at volumes well above those of a typical SOC (Fig.~\ref{fig:roivsN}). The API route therefore remains more convenient for most SOCs, with $N$ serving as an indicator of SOC scale because larger SOCs receive more alerts. However, privacy requirements that prohibit sending data to the cloud, together with emerging models that reduce deployment requirements without sacrificing performance, could shift the balance toward self-hosting.

The setting is also shaped by the coefficient of variation, $\mathrm{CV}$. Its uncertainty reflects both limited knowledge of appropriate input values and their intrinsic variability. For example, a mature use case with known, stable values, such as documentation or reporting, has a low $\mathrm{CV}$. A high $\mathrm{CV}$ instead characterizes novel or poorly measured settings, such as a SOC that does not yet track its metrics adequately or a domain where LLM use is not established. Sweeping $\mathrm{CV}$ therefore shows how the same decision performs across established, well-understood and more speculative settings. Accordingly, we assess uncertainty directly through ROI and VaR. In analyses that isolate a single factor, we fix $\mathrm{CV}$ at its low level to limit its influence on the results.

The model is designed to support decision-making. Consider, for example, a mid-to-large SOC evaluating an LLM rollout. The decision-maker, such as the SOC manager, sets the operating point using the organization's own data, including analyst costs, effort profiles, and alert distributions. Most of these are already tracked as standard SOC metrics (Sec.~\ref{sec:reference-soc}), and estimates become more precise as more inputs are measured. After selecting a candidate model, whose accuracy and price are then known, the decision-maker can examine $P(\mathrm{ROI}>0)$ across capital expenditure ratios $\theta$ at the SOC's uncertainty level (Fig.~\ref{fig:proi}). This indicates the highest capex at which adoption meets the chosen confidence threshold. The $\mathrm{VaR}_{10\%}$ (Fig.~\ref{fig:var10}) indicates the potential loss in the worst decile. Adoption is defensible when the capex supported at the chosen confidence level covers the estimated one-off cost and the worst-decile loss remains within tolerance.

We release the simulation code so that practitioners can use their own parameters and boundary conditions.

\subsection{Limitations and future works}\label{sec:limitations}
The model and its SOC instantiation provide a first, deliberately conservative demonstration that LLM adoption can be assessed economically before an organization commits to it. Several extensions could strengthen and broaden the analysis, as outlined below.
\begin{itemize}[leftmargin=*]
\item \textit{Empirical validation.} We instantiated the model using public benchmarks and the authors' operational experience, which highlighted how difficult it is to obtain reliable data. A natural next step is to validate the model using operational data from a live SOC. This would test its fit to real cases and assess how consistently the required metrics are tracked.

\item \textit{Model refinement.} To remain applicable across domains, the model makes several assumptions, including the use of aggregate values. Future versions could distinguish accuracy, speed ratio, and rework penalty by task, role, and severity. They could also introduce dedicated AI roles and tasks, which the current model subsumes into existing ones. This generality also allows the model to be extended beyond SOCs and cybersecurity to other information-intensive operations.

\item \textit{Infrastructure modeling.} The estimated self-hosted cost is a conservative upper bound because it does not account for serving efficiencies in real deployments, such as request batching and KV-cache reuse. This area is evolving rapidly, and such efficiencies may be large enough to change the adoption decision. A dedicated study of real self-hosted deployments would therefore provide a more accurate comparison with the API option.

\item \textit{Security-specific benchmarks.} Deriving accuracy values $E^m$ from general cyber benchmarks revealed an important limitation. These heterogeneous benchmarks cover largely disjoint sets of models and subdomains, and report different accuracy scores for the same model, making them imperfect proxies. A rigorous benchmark of model efficacy on security tasks that is both scenario-specific and comparable is still lacking. Such a benchmark would provide a firmer basis for the accuracy input and support broader evaluations of LLM performance in the security domain.

\end{itemize}

\section{Conclusion}\label{sec:conclusion}
Whether adopting an LLM pays off in an operational setting depends on a few operational factors, as well as on the choice of model. We framed the decision as an investment under deep uncertainty and formalized it in an economic framework, which we instantiated on a Security Operations Center as a first application. Rather than reporting a single ROI figure, the framework provides the full ROI distribution and a clear basis for a go/no-go decision, including the potential loss if the decision goes wrong.

At the simulated operating points, most configurations are not reliably profitable. Only high-accuracy models are profitable, and only when both capital costs and uncertainty are low. Outcomes depend less on model price than on how far its output can be acted on without human review. Paying more does not yield a better return, whereas granting the model greater autonomy does, provided it is accurate enough to be trusted. Self-hosting rarely helps at a typical SOC's alert volume, where GPU costs outweigh the savings.

Using the open-source code we release, an organization can enter its own figures and assess whether adoption pays off, at what capital cost, and with what risk of loss. As security operations move toward the agentic cognitive SOC model, the same tool shows when the economics become favorable. Because the framework is based on generic events and criticality, it is also designed to extend beyond the SOC to other information-intensive operations. Adoption is not yet a winning bet, but the conditions under which it becomes one are now measurable.

\section*{Funding}
This research did not receive any specific grant from funding agencies in the public, commercial, or not-for-profit sectors.

\section*{Data availability}
The Monte Carlo simulation code and the benchmark-derived parameters used in this study will be made publicly available upon publication. All source benchmarks are cited in the paper.

\section*{Declaration of generative AI and AI-assisted technologies in the manuscript preparation process}
During the preparation of this work, the authors used Claude (Anthropic) for language editing, structural revision, clarity improvements, and coding support. After using these tools, the authors reviewed and edited all content and code, and take full responsibility for the published article. 

\section*{Acknowledgments}
This work was partly carried out during a research visit of the first author to Télécom SudParis, Institut Polytechnique de Paris, as part of his PhD program. The authors also thank the security practitioners who provided operational insights that informed the model's parameterization.

\printcredits

\bibliographystyle{unsrtnat}
\bibliography{BIBLIO_DSS_hypeROI_v1_2}

\appendix

\section{Proofs and Distribution Parametrization}\label{app:proofs}

\subsection{Return-on-investment threshold}\label{app:ROI-proof}
Proposition~\ref{prop:ROI} bounds the capital expenditure ratio $\theta$ compatible with a target return $y$. The bound follows by rewriting the ROI as a function of $\theta$.

\begin{proof}[Proof of Proposition~\ref{prop:ROI}]
From \eqref{eq:theta} and \eqref{eq:ROI},
\[
\textit{ROI}=\frac{\Delta C-C_\text{cap}}{C_\text{cap}}
=\frac{\Delta C}{C_\text{cap}}-1
=\frac{\Delta C}{\theta\,C\nollm}-1 .
\]
Therefore
\[
\textit{ROI}>y \iff \frac{\Delta C}{\theta\,C\nollm}>1+y
\iff \theta<\frac{\Delta C}{(1+y)\,C\nollm}. \qedhere
\]
\end{proof}

\subsection{Distribution parametrization}\label{app:par-proof}
Each uncertain parameter is specified by its mean $\mu$ and coefficient of variation $\mathrm{CV}$; the two lemmas below recover, for the Beta and Gamma families, the native distribution parameters from $\mu$ and $\mathrm{CV}$.

\begin{lemma}
\label{lem:beta}
Let $X$ be a beta-distribution with parameter $(\alpha,\beta)$ with $\alpha>0,\beta>0$ and mean $\mu\in(0,1)$.
Then the parameters $(\alpha,\beta)$ can be expressed as
\[
\alpha \;=\; \mu\left(\frac{1-\mu}{\mu\,\mathrm{CV}^{2}}-1\right),
\qquad
\beta \;=\; (1-\mu)\left(\frac{1-\mu}{\mu\,\mathrm{CV}^{2}}-1\right).
\]
\end{lemma}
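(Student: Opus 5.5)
The plan is the method of moments: write the mean and the coefficient of variation of a Beta$(\alpha,\beta)$ variable in closed form, then invert these two relations. The first step recalls the standard moments:
\[
\mu=\frac{\alpha}{\alpha+\beta},\qquad
\mathrm{Var}(X)=\frac{\alpha\beta}{(\alpha+\beta)^2(\alpha+\beta+1)}.
\]
The key simplification is to introduce the concentration $\nu\coloneqq\alpha+\beta>0$. With it, $\alpha=\mu\nu$ and $\beta=(1-\mu)\nu$, and the variance becomes $\mu(1-\mu)/(\nu+1)$. The whole problem then reduces to solving for the single unknown $\nu$.

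The second step uses the definition $\mathrm{CV}=\sqrt{\mathrm{Var}(X)}/\mu$. Squaring gives
\[
\mathrm{CV}^2=\frac{\mu(1-\mu)}{\mu^2(\nu+1)}=\frac{1-\mu}{\mu(\nu+1)},
\]
and hence
\[
\nu=\frac{1-\mu}{\mu\,\mathrm{CV}^2}-1.
\]
Substituting this expression into $\alpha=\mu\nu$ and $\beta=(1-\mu)\nu$ yields exactly the formulas in the statement. Since the map $(\alpha,\beta)\mapsto(\mu,\mathrm{CV})$ is inverted explicitly at each stage, the parameters are also uniquely determined by $(\mu,\mathrm{CV})$.

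The only delicate point, and the one I expect to require care, is admissibility. The hypothesis $\alpha,\beta>0$ forces $\nu>0$, that is, $\mathrm{CV}^2<(1-\mu)/\mu$. I will remark that this holds automatically whenever $(\alpha,\beta)$ are genuine Beta parameters, because $\nu=\alpha+\beta>0$; so the formula is valid exactly on this range. This also explains the paper's footnote that the CV is capped at the maximum feasible value for a given mean. If the lemma is read in the converse direction, as a sampling recipe given $(\mu,\mathrm{CV})$, the same inequality is the condition under which the recovered parameters are positive.
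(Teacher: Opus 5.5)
Your proposal is correct and matches the paper's proof essentially step for step. The paper also sets $s=\alpha+\beta$, reduces the variance to $\mu(1-\mu)/(s+1)$, solves $\mathrm{CV}^2=(1-\mu)/(\mu(s+1))$ for $s$, and substitutes back; your admissibility condition $\mathrm{CV}^2<(1-\mu)/\mu$ is a correct addition that the paper only mentions in its footnote on capping the CV.
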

\begin{proof}
For $X\sim \mathrm{Beta}(\alpha,\beta)$, the mean and variance are
\[
\mu=\; \frac{\alpha}{\alpha+\beta},
\qquad
\mathrm{Var}(X) \;=\; \frac{\alpha\beta}{(\alpha+\beta)^2(\alpha+\beta+1)}.
\]
Let $s \triangleq \alpha+\beta$. We have
\begin{align}
\label{eq:alpha-beta}
\alpha=\mu s,
\qquad
\beta=s-\alpha=(1-\mu)s.
\end{align}
Substituting into the variance yields
\[
\mathrm{Var}(X)
=\frac{(\mu s)\bigl((1-\mu)s\bigr)}{s^2(s+1)}
=\frac{\mu(1-\mu)}{s+1}.
\]
By definition of the coefficient of variation,
\[
\mathrm{CV}^2
=\frac{\mathrm{Var}(X)}{\mu^2}
=\frac{\mu(1-\mu)}{(s+1)\mu^2}
=\frac{1-\mu}{\mu(s+1)},
\]
and thus $s=\frac{1-\mu}{\mu\,\mathrm{CV}^2}-1$. Replacing this into~\eqref{eq:alpha-beta} proves the claim.
\end{proof}

\begin{lemma}
\label{lem:gamma}
Let $X$ be a Gamma-distributed parameter with \emph{shape} parameter $\kappa>0$ and \emph{scale} parameter $\varsigma>0$ and mean $\mu$. Such parameters can be expressed as
\[
\kappa \;=\; 1/\mathrm{CV}^{2},
\qquad
\varsigma \;=\; \mu\,\mathrm{CV}^{2}.
\]
\end{lemma}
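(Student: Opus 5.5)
The plan mirrors the argument for Lemma~\ref{lem:beta}: write the first two moments of the Gamma family in terms of the native parameters, then invert these relations using the mean $\mu$ and the coefficient of variation $\mathrm{CV}$. For $X\sim\mathrm{Gamma}(\kappa,\varsigma)$ in the shape--scale parametrization, I will use the standard identities
\[
\mu=\kappa\,\varsigma,\qquad \mathrm{Var}(X)=\kappa\,\varsigma^{2}.
\]
These follow from the moment generating function $(1-\varsigma t)^{-\kappa}$, or directly from $\int_0^\infty x^{\kappa}e^{-x/\varsigma}\,dx=\Gamma(\kappa+1)\varsigma^{\kappa+1}$ together with $\Gamma(\kappa+1)=\kappa\Gamma(\kappa)$. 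I will take them as known.

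The key step is to form the squared coefficient of variation. Since the Gamma support is $[0,\infty)$ and $\kappa,\varsigma>0$, we have $\mu>0$, so division by $\mu$ is legitimate. Then
\[
\mathrm{CV}^{2}=\frac{\mathrm{Var}(X)}{\mu^{2}}=\frac{\kappa\varsigma^{2}}{\kappa^{2}\varsigma^{2}}=\frac{1}{\kappa}.
\]
This gives $\kappa=1/\mathrm{CV}^{2}$ immediately. Note that the CV depends on the shape alone and not on the scale, which is why the inversion decouples so cleanly. Substituting into $\mu=\kappa\varsigma$ yields $\varsigma=\mu/\kappa=\mu\,\mathrm{CV}^{2}$.

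Finally, I will record that the map is a bijection. Any $\mu>0$ and $\mathrm{CV}>0$ give admissible $\kappa>0$ and $\varsigma>0$, and conversely. So, unlike the Beta case, no cap on the CV is needed; this contrasts with the footnote on the Beta's maximum feasible CV. There is no real obstacle here. The only point requiring care is fixing the scale (rather than rate) convention, since with the rate parametrization the second formula would instead read $1/(\mu\,\mathrm{CV}^{2})$.
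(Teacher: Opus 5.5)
Your proposal is correct and follows the paper's proof essentially step for step: you use $\mu=\kappa\varsigma$ and $\mathrm{Var}(X)=\kappa\varsigma^{2}$, obtain $\mathrm{CV}^{2}=1/\kappa$, and substitute back to get $\varsigma=\mu\,\mathrm{CV}^{2}$. Your remarks on the positivity of $\mu$, the bijectivity of the map (so no CV cap is needed, unlike the Beta case), and the scale-versus-rate convention are accurate additions that the paper leaves implicit.
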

\begin{proof}
For a Gamma distribution parameterized by shape $\kappa$ and scale $\varsigma$,
the mean and variance are given by
\[
\mu \;=\; \kappa\,\varsigma,
\qquad
\mathrm{Var}(X) \;=\; \kappa\,\varsigma^{2}.
\]
By definition of the coefficient of variation,
\[
\mathrm{CV}^{2}
=\frac{\mathrm{Var}(X)}{\mu^2}
=\frac{\kappa\,\varsigma^{2}}{(\kappa\,\varsigma)^2}
=\frac{1}{\kappa}
\Rightarrow
\kappa=\frac{1}{\mathrm{CV}^{2}} .
\]
Substituting this expression into $\mu=\kappa\,\varsigma$ gives
$\varsigma=\mu/\kappa=\mu\,\mathrm{CV}^{2}$.
\end{proof}

\onecolumn

\section{Cybersecurity Benchmark-Derived Model Parameters} \label{appendix: model_params}
This section lists the models collected from the six benchmarks~\cite{deason_cybersoceval_2025, jing_secbench_2025, alam_ctibench_2024, tihanyi_cybermetric_2024, cherif_dfir-metric_2025, yu_cs-eval_2025}, with their deployment mode, token prices, total and active parameters, mean benchmark accuracy, and per-model references; prices are the latest published, or the last archived value for deprecated models.

\setlength{\LTcapwidth}{\textwidth}
{\small
\begin{longtable}{l l l r r r r r l}
\caption{LLM pricing, size and cybersecurity benchmark accuracy. 71 distinct models (73 rows): DeepSeek-V3 and DeepSeek-R1 evaluated under both API and self-hosted deployment.}\label{tab:llm-pricing} \\
\toprule
\textbf{Model} & \textbf{Provider} & \textbf{Deploy Mode} & $\boldsymbol{\gamma^{m,\text{in}}}$ & $\boldsymbol{\gamma^{m,\text{out}}}$ & $\boldsymbol{\delta^m_\text{tot}}$ & $\boldsymbol{\delta^m_\text{act}}$ & $\boldsymbol{E^m}$ & \textbf{Ref.} \\
\midrule
\endfirsthead

\multicolumn{9}{c}{\tablename\ \thetable{} -- \textit{continued from previous page}} \\[0.5ex]
\toprule
\textbf{Model} & \textbf{Provider} & \textbf{Deploy Mode} & $\boldsymbol{\gamma^{m,\text{in}}}$ & $\boldsymbol{\gamma^{m,\text{out}}}$ & $\boldsymbol{\delta^m_\text{tot}}$ & $\boldsymbol{\delta^m_\text{act}}$ & $\boldsymbol{E^m}$ & \textbf{Ref.} \\
\midrule
\endhead

\midrule
\multicolumn{9}{r}{\textit{Continued on next page}} \\
\endfoot

\bottomrule
\endlastfoot

    Hunyuan-Turbo & Tencent & API & \$0.33 & \$1.33 & -- & -- & 94\% & \cite{jing_secbench_2025} \\
    gpt-4.1 & OpenAI & API & \$2.00 & \$8.00 & -- & -- & 93\% & \cite{cherif_dfir-metric_2025} \\
    gpt-4-turbo & OpenAI & API & \$10.00 & \$30.00 & -- & -- & 92\% & \cite{tihanyi_cybermetric_2024} \\
    Grok-3 & xAI & API & \$3.00 & \$15.00 & -- & -- & 92\% & \cite{cherif_dfir-metric_2025} \\
    DeepSeek-V3 & DeepSeek & API & \$0.27 & \$1.10 & -- & -- & 91\% & \cite{jing_secbench_2025,cherif_dfir-metric_2025} \\
    o1-mini & OpenAI & API & \$1.10 & \$4.40 & -- & -- & 91\% & \cite{jing_secbench_2025} \\
    gemini-2.5-flash & Google & API & \$0.30 & \$2.50 & -- & -- & 90\% & \cite{cherif_dfir-metric_2025} \\
    o1-preview & OpenAI & API & \$15.00 & \$60.00 & -- & -- & 90\% & \cite{jing_secbench_2025} \\
    gpt-4o-mini & OpenAI & API & \$0.15 & \$0.60 & -- & -- & 87\% & \cite{jing_secbench_2025,cherif_dfir-metric_2025} \\
    gemini-1.0-pro & Google & API & \$0.50 & \$1.50 & -- & -- & 87\% & \cite{tihanyi_cybermetric_2024} \\
    gpt-3.5-turbo-16k & OpenAI & API & \$3.00 & \$4.00 & -- & -- & 81\% & \cite{yu_cs-eval_2025} \\
    gpt-4o & OpenAI & API & \$2.50 & \$10.00 & -- & -- & 79\% & \cite{deason_cybersoceval_2025,jing_secbench_2025,tihanyi_cybermetric_2024,cherif_dfir-metric_2025,yu_cs-eval_2025} \\
    gpt-3.5-turbo & OpenAI & API & \$0.50 & \$1.50 & -- & -- & 75\% & \cite{jing_secbench_2025,alam_ctibench_2024,tihanyi_cybermetric_2024} \\
    gpt-4 & OpenAI & API & \$30.00 & \$60.00 & -- & -- & 71\% & \cite{alam_ctibench_2024,yu_cs-eval_2025} \\
    claude-3-7-sonnet & Anthropic & API & \$3.00 & \$15.00 & -- & -- & 58\% & \cite{deason_cybersoceval_2025,cherif_dfir-metric_2025} \\
    gemini-1.5-pro & Google & API & \$1.25 & \$5.00 & -- & -- & 57\% & \cite{alam_ctibench_2024} \\
    o3 & OpenAI & API & \$2.00 & \$8.00 & -- & -- & 41\% & \cite{deason_cybersoceval_2025} \\
    gemini-2.5-pro & Google & API & \$1.25 & \$10.00 & -- & -- & 36\% & \cite{deason_cybersoceval_2025} \\
    DeepSeek-R1 & DeepSeek & API & \$0.55 & \$2.19 & -- & -- & 31\% & \cite{deason_cybersoceval_2025} \\
    Qwen-2-72B & Alibaba & self-hosted & -- & -- & 72 & 72 & 92\% & \cite{jing_secbench_2025} \\
    DeepSeek-V3 & DeepSeek & self-hosted & -- & -- & 671 & 37 & 91\% & \cite{jing_secbench_2025,cherif_dfir-metric_2025} \\
    Mixtral-8x7B-Instruct & Mistral AI & self-hosted & -- & -- & 46.7 & 12.9 & 91\% & \cite{tihanyi_cybermetric_2024} \\
    Qwen-2.5-72B & Alibaba & self-hosted & -- & -- & 72 & 72 & 90\% & \cite{cherif_dfir-metric_2025} \\
    Yi-1.5-34B & 01.AI & self-hosted & -- & -- & 34 & 34 & 90\% & \cite{jing_secbench_2025} \\
    Falcon-180B-Chat & TII & self-hosted & -- & -- & 180 & 180 & 88\% & \cite{tihanyi_cybermetric_2024} \\
    Qwen-2-7B & Alibaba & self-hosted & -- & -- & 7.6 & 7.6 & 88\% & \cite{jing_secbench_2025} \\
    Qwen-2-72B-Instruct & Alibaba & self-hosted & -- & -- & 72 & 72 & 87\% & \cite{yu_cs-eval_2025} \\
    llama-3.3-70b & Meta & self-hosted & -- & -- & 70 & 70 & 86\% & \cite{cherif_dfir-metric_2025} \\
    Yi-1.5-9B & 01.AI & self-hosted & -- & -- & 9 & 9 & 86\% & \cite{jing_secbench_2025} \\
    DeepSeek-V2 & DeepSeek & self-hosted & -- & -- & 236 & 21 & 85\% & \cite{yu_cs-eval_2025} \\
    GLM-4-9B & Zhipu AI & self-hosted & -- & -- & 9 & 9 & 85\% & \cite{jing_secbench_2025,yu_cs-eval_2025} \\
    gemma-3-27b & Google & self-hosted & -- & -- & 27 & 27 & 85\% & \cite{cherif_dfir-metric_2025} \\
    WizardLM-2 & Microsoft & self-hosted & -- & -- & 141 & 39 & 85\% & \cite{cherif_dfir-metric_2025} \\
    llama-3.1-70b-instruct & Meta & self-hosted & -- & -- & 70 & 70 & 84\% & \cite{yu_cs-eval_2025} \\
    Mixtral-8x7B & Mistral AI & self-hosted & -- & -- & 46.7 & 12.9 & 83\% & \cite{jing_secbench_2025,cherif_dfir-metric_2025} \\
    Qwen-2-7B-Instruct & Alibaba & self-hosted & -- & -- & 7.6 & 7.6 & 83\% & \cite{yu_cs-eval_2025} \\
    gemma-2-9b & Google & self-hosted & -- & -- & 9.2 & 9.2 & 80\% & \cite{cherif_dfir-metric_2025} \\
    DeepSeek-V2-Lite & DeepSeek & self-hosted & -- & -- & 16 & 2.4 & 79\% & \cite{jing_secbench_2025} \\
    Qwen-14B-Chat & Alibaba & self-hosted & -- & -- & 14 & 14 & 79\% & \cite{yu_cs-eval_2025} \\
    llama-3.1-8b-instruct & Meta & self-hosted & -- & -- & 8 & 8 & 77\% & \cite{yu_cs-eval_2025} \\
    gemma-1.1-7b-it & Google & self-hosted & -- & -- & 8.5 & 8.5 & 77\% & \cite{tihanyi_cybermetric_2024} \\
    Qwen-1.5-14B-Chat & Alibaba & self-hosted & -- & -- & 14 & 14 & 77\% & \cite{yu_cs-eval_2025} \\
    llama-3-8b-instruct & Meta & self-hosted & -- & -- & 8 & 8 & 76\% & \cite{tihanyi_cybermetric_2024} \\
    Mistral-7B-Instruct-v0.2 & Mistral AI & self-hosted & -- & -- & 7 & 7 & 75\% & \cite{tihanyi_cybermetric_2024,yu_cs-eval_2025} \\
    Qwen-1.5-MoE-A2.7B-Chat & Alibaba & self-hosted & -- & -- & 14 & 2.7 & 75\% & \cite{yu_cs-eval_2025} \\
    Baichuan-2-13B-Chat & Baichuan & self-hosted & -- & -- & 13 & 13 & 74\% & \cite{yu_cs-eval_2025} \\
    zephyr-7b-beta & HuggingFace & self-hosted & -- & -- & 7 & 7 & 74\% & \cite{tihanyi_cybermetric_2024} \\
    flan-t5-xxl & Google & self-hosted & -- & -- & 11 & 11 & 72\% & \cite{tihanyi_cybermetric_2024} \\
    llama-2-70b & Meta & self-hosted & -- & -- & 70 & 70 & 72\% & \cite{tihanyi_cybermetric_2024} \\
    Baichuan-13B-Chat & Baichuan & self-hosted & -- & -- & 13 & 13 & 71\% & \cite{yu_cs-eval_2025} \\
    llama-3-70b & Meta & self-hosted & -- & -- & 70 & 70 & 68\% & \cite{jing_secbench_2025,alam_ctibench_2024} \\
    360Zhinao-7B-Chat-4K & Qihoo 360 & self-hosted & -- & -- & 7 & 7 & 66\% & \cite{yu_cs-eval_2025} \\
    Yi-6B-Chat & 01.AI & self-hosted & -- & -- & 6 & 6 & 65\% & \cite{yu_cs-eval_2025} \\
    Qwen-1.5-7B & Alibaba & self-hosted & -- & -- & 7.6 & 7.6 & 64\% & \cite{tihanyi_cybermetric_2024} \\
    Qwen-1.5-MoE-A2.7B & Alibaba & self-hosted & -- & -- & 14 & 2.7 & 62\% & \cite{tihanyi_cybermetric_2024} \\
    chatglm3-6b & Zhipu AI & self-hosted & -- & -- & 6 & 6 & 57\% & \cite{yu_cs-eval_2025} \\
    Mistral-3B & Mistral AI & self-hosted & -- & -- & 3 & 3 & 56\% & \cite{cherif_dfir-metric_2025} \\
    Qwen-7B & Alibaba & self-hosted & -- & -- & 7.6 & 7.6 & 53\% & \cite{tihanyi_cybermetric_2024} \\
    Phi-2 & Microsoft & self-hosted & -- & -- & 2.7 & 2.7 & 52\% & \cite{tihanyi_cybermetric_2024} \\
    llama-3-ChatQA-1.5-8b & NVIDIA & self-hosted & -- & -- & 8 & 8 & 51\% & \cite{tihanyi_cybermetric_2024} \\
    DeciLM-7B & Deci AI & self-hosted & -- & -- & 7 & 7 & 50\% & \cite{tihanyi_cybermetric_2024} \\
    SecGPT-13B & Clouditera & self-hosted & -- & -- & 13 & 13 & 47\% & \cite{yu_cs-eval_2025} \\
    llama-3-8b & Meta & self-hosted & -- & -- & 8 & 8 & 45\% & \cite{jing_secbench_2025,alam_ctibench_2024,tihanyi_cybermetric_2024} \\
    Qwen-1.5-4B & Alibaba & self-hosted & -- & -- & 4 & 4 & 40\% & \cite{tihanyi_cybermetric_2024} \\
    llama-4-Maverick-17b-128E-instruct & Meta & self-hosted & -- & -- & 400 & 17 & 39\% & \cite{deason_cybersoceval_2025} \\
    Genstruct-7B & NousResearch & self-hosted & -- & -- & 7 & 7 & 38\% & \cite{tihanyi_cybermetric_2024} \\
    llama-2-13b-chat-hf & Meta & self-hosted & -- & -- & 13 & 13 & 38\% & \cite{yu_cs-eval_2025} \\
    gemma-7b & Google & self-hosted & -- & -- & 8.5 & 8.5 & 37\% & \cite{tihanyi_cybermetric_2024} \\
    llama-4-Scout-17b-16E & Meta & self-hosted & -- & -- & 109 & 17 & 33\% & \cite{deason_cybersoceval_2025} \\
    DeepSeek-R1 & DeepSeek & self-hosted & -- & -- & 671 & 37 & 31\% & \cite{deason_cybersoceval_2025} \\
    Dolly-V2-12b-BF16 & Databricks & self-hosted & -- & -- & 12 & 12 & 30\% & \cite{tihanyi_cybermetric_2024} \\
    gemma-2b & Google & self-hosted & -- & -- & 2.6 & 2.6 & 21\% & \cite{tihanyi_cybermetric_2024} \\
    Phi-3-mini-4k-instruct & Microsoft & self-hosted & -- & -- & 3.8 & 3.8 & 5\% & \cite{tihanyi_cybermetric_2024} \\

\end{longtable}}

\twocolumn

\end{document}